\RequirePackage{amsmath}
\documentclass[runningheads]{llncs}
\usepackage[utf8]{inputenc}
\usepackage{xspace}
\usepackage{stmaryrd}
\usepackage{import}
\usepackage{graphicx}
\usepackage[caption=false,subrefformat=parens,labelformat=parens]{subfig}
\usepackage[linesnumbered,ruled,vlined]{algorithm2e}
\usepackage{comment}
\usepackage{adjustbox}
\usepackage{listings}
\usepackage{lstautogobble}
\usepackage{hyperref}
\usepackage{amssymb}
\usepackage{amsmath}
\usepackage{url}
\usepackage{xspace}
\usepackage{paralist}
\usepackage[dvipsnames]{xcolor}

\newcommand{\yv}[1]{{\color{red}[{\bf YV:} #1]}}
\newcommand{\ag}[1]{\textcolor{RoyalPurple}{[{\bf AG:} #1]}}

\newcommand{\limp}{\Rightarrow}

\newcommand{\Land}{\bigwedge}
\newcommand{\Lor}{\bigvee}

\newcommand{\sat}{\textsc{sat}\xspace}
\newcommand{\safe}{\textsc{safe}\xspace}
\newcommand{\unsafe}{\textsc{unsafe}}

\newcommand{\Tr}{\mathit{Tr}}
\newcommand{\Bad}{\mathit{Bad}}
\newcommand{\Init}{\mathit{Init}}
\newcommand{\Inv}{\mathit{Inv}}

\newcommand{\Pdr}{\textsc{Pdr}\xspace}

\newcommand{\cA}{\mathcal{A}}

\newcommand{\cL}{\mathcal{L}}
\newcommand{\cK}{\mathcal{K}}
\newcommand{\cW}{\mathcal{W}}

\newcommand{\Kavyextend}{\textsc{kAvyExtend}\xspace}
\newcommand{\Kavy}{\textsc{kAvy}\xspace}
\newcommand{\Kavyvanilla}{\textsc{vanilla}\xspace}

\newcommand{\PdrPush}{\textsc{PdrPush}\xspace}
\newcommand{\emptylist}{[\;]}
\newcommand{\vF}{\vec{F}}
\newcommand{\vH}{\vec{H}}
\newcommand{\vG}{\vec{G}}
\newcommand{\vA}{\vec{A}}
\newcommand{\vI}{\vec{I}}

\newcommand{\Avy}{\textsc{Avy}\xspace}

\newcommand{\AvyMkTrace}{\textsc{AvyMkTrace}\xspace}

\newcommand{\isSat}{\textsc{isSat}}
\newcommand{\seqItp}{\textsc{seqItp}}

\newcommand{\pdkind}{\textsc{Pd-Kind}\xspace}
\newcommand{\kicthree}{\textsc{KIC3}\xspace}
\newcommand{\ic}{\textsc{IC3}\xspace}

\newcommand{\trace}{[F_0,\ldots ,F_N] }

\newcommand{\Vars}{\bar{v}}
\newcommand{\Pdrblock }{\textsc{PdrBlock}\xspace}

\newcommand{\KindTr}[3]{\Tr\llbracket {#1}^{#2}\rrbracket^{#3}}
\newcommand{\Kineq}[2]{1\leq {#2} \leq {#1}+1 \leq N+1}
 \usepackage{wrapfig}
\usepackage[firstpage]{draftwatermark}
\SetWatermarkText{\hspace*{4.5in}\raisebox{8.5in}{\includegraphics[scale=0.1]{}}}
\SetWatermarkAngle{0}
\title{Interpolating Strong Induction}
\author{Hari Govind V K\inst{1} \and Yakir Vizel\inst{2} \and Vijay
  Ganesh\inst{1} \and Arie Gurfinkel\inst{1}}
\institute{University of Waterloo \and The Technion}
\date{}
\begin{document}
\maketitle
\pagestyle{empty}
\DontPrintSemicolon
\vspace{-20pt}
\begin{abstract}
  The principle of strong induction, also known as $k$-induction is one of the
  first techniques for unbounded SAT-based Model Checking (SMC). While elegant
  and simple to apply, properties as such are rarely $k$-inductive and when they
  can be strengthened, there is no effective strategy to guess the depth of
  induction. It has been mostly displaced by techniques that compute inductive
  strengthenings based on interpolation and property directed reachability
  (\Pdr). In this paper, we present \Kavy, an SMC algorithm that effectively
  uses $k$-induction to guide interpolation and \Pdr -style inductive
  generalization. Unlike pure $k$-induction, \Kavy uses \Pdr -style
  generalization to compute and strengthen an inductive trace. Unlike pure \Pdr,
  \Kavy uses relative $k$-induction to construct an inductive invariant. The
  depth of induction is adjusted dynamically by minimizing a proof of
  unsatisfiability. We have implemented \Kavy within the \Avy Model Checker and
  evaluated it on HWMCC instances. Our results show that \Kavy is more effective
  than both \Avy and \Pdr, and that using $k$-induction leads to faster running
  time and solving more instances. Further, on a class of benchmarks, called
  {\it shift}, \Kavy is orders of magnitude faster than \Avy, \Pdr and
  $k$-induction.

\end{abstract}

 \section{Introduction}
\label{sec:intro}

The principle of strong induction, also known as $k$-induction, is a
generalization of (simple) induction that extends the base- and inductive-cases
to $k$ steps of a transition system~\cite{DBLP:conf/fmcad/SheeranSS00}. A safety
property $P$ is $k$-inductive in a transition system $T$ iff (a) $P$ is true in
the first $(k-1)$ steps of $T$, and (b) if $P$ is assumed to hold for $(k-1)$
consecutive steps, then $P$ holds in $k$ steps of $T$. Simple induction is
equivalent to $1$-induction. Unlike induction, strong induction is complete for
safety properties: a property $P$ is safe in a transition system $T$ iff there
exists a natural number $k$ such that $P$ is $k$-inductive in $T$ (assuming the
usual restriction to simple paths). This makes $k$-induction a powerful method
for unbounded SAT-based Model Checking (SMC).

Unlike other SMC techniques, strong induction reduces model checking to pure SAT
that does not require any additional features such as solving with
assumptions~\cite{DBLP:conf/fmcad/EenMA10},
interpolation~\cite{DBLP:conf/cav/McMillan03}, resolution
proofs~\cite{DBLP:conf/fmcad/HeuleHW13}, Maximal Unsatisfiable Subsets
(MUS)~\cite{DBLP:journals/jsat/BelovM12}, etc. It easily integrates with
existing SAT-solvers and immediately benefits from any improvements in
heuristics~\cite{DBLP:conf/sat/LiangOMTLG18,DBLP:conf/sat/LiangGPC16}, pre- and
in-processing~\cite{DBLP:conf/cade/JarvisaloHB12}, and parallel
solving~\cite{DBLP:conf/cp/AudemardLST16}. The simplicity of applying
$k$-induction made it the go-to technique for SMT-based infinite-state model
checking~\cite{DBLP:conf/cav/ChampionMST16,DBLP:conf/cav/MouraORRSST04,DBLP:conf/fmcad/JovanovicD16}.
In that context, it is particularly effective in combination with invariant
synthesis~\cite{DBLP:conf/nfm/KahsaiGT11,DBLP:conf/nfm/GarocheKT13}. Moreover,
for some theories, strong induction is strictly stronger than
$1$-induction~\cite{DBLP:conf/fmcad/JovanovicD16}: there are properties that
are $k$-inductive, but have no $1$-inductive strengthening.

Notwithstanding all of its advantages, strong induction has been mostly
displaced by more recent SMC techniques such as
Interpolation~\cite{DBLP:reference/mc/McMillan18}, Property Directed
Reachability~\cite{DBLP:conf/vmcai/Bradley11,DBLP:conf/fmcad/EenMB11,DBLP:conf/fmcad/GurfinkelI15,DBLP:conf/fmcad/BerryhillIVV17},
and their combinations~\cite{DBLP:conf/cav/VizelG14}. In SMC $k$-induction is
equivalent to induction: any $k$-inductive property $P$ can be strengthened to
an inductive property
$Q$~\cite{DBLP:conf/fmcad/GurfinkelI17,DBLP:conf/birthday/BjornerGMR15}. Even
though in the worst case $Q$ is exponentially
larger than $P$~\cite{DBLP:conf/birthday/BjornerGMR15}, this is rarely observed
in practice~\cite{DBLP:conf/fmcad/MebsoutT16}. Furthermore, the SAT queries get
very hard as $k$ increases and usually succeed only for rather small values of
$k$. A recent work~\cite{DBLP:conf/fmcad/GurfinkelI17} shows that strong
induction can be integrated in \Pdr. However,~\cite{DBLP:conf/fmcad/GurfinkelI17}
argues that $k$-induction is hard to control in the context of \Pdr since
choosing a proper value of $k$ is difficult. A wrong choice leads to a form of
state enumeration. In~\cite{DBLP:conf/fmcad/GurfinkelI17}, $k$ is fixed to $5$,
and regular induction is used as soon as $5$-induction fails.

In this paper, we present \Kavy, an SMC algorithm that effectively uses
$k$-induction to guide interpolation and \Pdr-style inductive generalization. As
many state-of-the-art SMC algorithms, \Kavy iteratively constructs candidate
inductive invariants for a given safety property $P$. However, the construction
of these candidates is driven by $k$-induction. Whenever $P$ is known to hold up
to a bound $N$, \Kavy searches for the smallest $k \leq N + 1$, such that either
$P$ or some of its strengthening is $k$-inductive. Once it finds the right $k$
and strengthening, it computes a $1$-inductive strengthening.

It is convenient to think of modern SMC algorithms (e.g., \Pdr and \Avy), and
$k$-induction, as two ends of a spectrum. On the one end, modern SMC algorithms
fix $k$ to $1$ and \emph{search} for a $1$-inductive strengthening of $P$. While
on the opposite end, $k$-induction fixes the strengthening of $P$ to be $P$
itself and \emph{searches} for a $k$ such that $P$ is $k$-inductive. \Kavy
\emph{dynamically} explores this spectrum, exploiting the interplay between
finding the right $k$ and finding the right strengthening.  

\begin{wrapfigure}{R}{.4\textwidth}
  \vspace{-0.5in}
\raggedleft
\begin{lstlisting}[language=Verilog,
  basicstyle=\linespread{0.5}\ttfamily\footnotesize,
  autogobble=true]
  reg [7:0] c = 0;
  always
    if(c == 64)
      c <= 0;
    else
      c <= c + 1; 
  end
  assert property (c < 66);
\end{lstlisting}
  \vspace{-10pt}
  \caption{ An example system. }
  \vspace{-20pt}
\label{fig:ex}
\end{wrapfigure}
As an example, consider a system in Fig.~\ref{fig:ex} that counts upto $64$ and
resets. The property, $p: c < 66$, is $2$-inductive. \ic, \Pdr and \Avy
iteratively guess a $1$-inductive strengthening of $p$. In the worst case, they
require at least $64$ iterations. On the other hand, \Kavy determines that $p$
is $2$-inductive after $2$ iterations, \emph{computes} a $1$-inductive invariant
$(c \neq 65) \land (c < 66)$, and terminates.

\Kavy builds upon the foundations of \Avy~\cite{DBLP:conf/cav/VizelG14}. \Avy
first uses Bounded Model Checking~\cite{DBLP:conf/tacas/BiereCCZ99} (BMC) to
prove that the property $P$ holds up to bound $N$. Then, it uses a sequence
interpolant~\cite{DBLP:conf/fmcad/VizelG09} and \Pdr-style
inductive-generalization~\cite{DBLP:conf/vmcai/Bradley11} to construct
$1$-inductive strengthening candidate for $P$. We emphasize that using
$k$-induction to construct $1$-inductive candidates allows \Kavy to efficiently
utilize many principles from \Pdr and \Avy. While maintaining $k$-inductive
candidates might seem attractive (since they may be smaller), they are also much
harder to generalize effectively~\cite{DBLP:conf/vmcai/Bradley11}.

We implemented \Kavy in the \Avy Model Checker, and evaluated it on the
benchmarks from the Hardware Model Checking Competition (HWMCC). Our experiments
show that \Kavy significantly improves the performance of \Avy and solves more
examples than either of \Pdr and \Avy. For a specific family of examples
from~\cite{DBLP:journals/mst/KovasznaiFB16}, \Kavy exhibits nearly constant time performance, compared to an
exponential growth of \Avy, \Pdr, and $k$-induction (see
Fig.~\ref{fig:shiftcactus} in Section~\ref{sec:evaluation}). This further
emphasizes the effectiveness of efficiently integrating strong induction into
modern SMC.

The rest of the paper is structured as follows. After describing the most
relevant related work, we present the necessary background in
Section~\ref{sec:background} and give an overview of SAT-based model checking
algorithms in Section~\ref{sec:satmc}. \Kavy is presented in
Section~\ref{sec:kavy}, followed by presentation of results in
Section~\ref{sec:evaluation}. Finally, we conclude the paper in
Section~\ref{sec:conclusion}. \paragraph{\textbf{Related work.}}
\Kavy builds on top of the ideas of \ic ~\cite{DBLP:conf/vmcai/Bradley11} and
\Pdr ~\cite{DBLP:conf/fmcad/EenMB11}. The use of interpolation for generating an
inductive trace is inspired by \Avy~\cite{DBLP:conf/cav/VizelG14}. While
conceptually, our algorithm is similar to \Avy, its proof of
correctness is non-trivial and is significantly different from that of
\Avy. We are not aware of any other work that combines interpolation with
strong induction.

There are two prior attempts enhancing \Pdr-style algorithms with $k$-induction.
\pdkind~\cite{DBLP:conf/fmcad/JovanovicD16} is an SMT-based Model Checking
algorithm for infinite-state systems inspired by \ic/\Pdr. It infers
$k$-inductive invariants driven by the property whereas \Kavy infers
$1$-inductive invariants driven by $k$-induction. \pdkind uses recursive
blocking with interpolation and model-based projection to block bad states, and
$k$-induction to propagate (push) lemmas to next level. While the algorithm is
very interesting it is hard to adapt it to SAT-based setting (i.e. SMC), and 
impossible to compare on HWMCC instances directly.

The closest related work is \kicthree~\cite{DBLP:conf/fmcad/GurfinkelI17}.
It modifies the counter example queue management strategy in \ic to
utilize $k$-induction during blocking. The main limitation is that the
value for $k$ must be chosen statically ($k=5$ is reported for the evaluation).
\Kavy also utilizes $k$-induction during blocking but computes the
value for $k$ dynamically. Unfortunately, the implementation is not available
publicly and we could not compare with it directly.

\section{Background}
\label{sec:background}

In this section, we present notations and background that is required
for the description of our algorithm.

\paragraph{Safety Verification.} A symbolic transition system $T$ is a tuple
$(\Vars, \Init, \Tr, \Bad)$, where $\Vars$ is a set of Boolean \emph{state}
variables. A state of the system is a complete valuation to all variables in
$\Vars$ (i.e., the set of states is $\{0,1\}^{|\Vars|}$). We write $\Vars' = \{v'
\mid v \in \Vars\}$) for the set of \emph{primed} variables, used to represent
the next state. $\Init$ and $\Bad$ are formulas over $\Vars$ denoting the set of
initial states and bad states, respectively, and $\Tr$ is a formula over $\Vars
\cup \Vars'$, denoting the transition relation. With abuse of notation, we use
formulas and the sets of states (or transitions) that they represent
interchangeably. In addition, we sometimes use a state $s$ to denote the formula
(cube) that characterizes it. For a formula $\varphi$ over $\Vars$, we use
$\varphi(\Vars')$, or $\varphi'$ in short, to denote the formula in which every
occurrence of $v \in \Vars$ is replaced by $v' \in \Vars'$. For simplicity of
presentation, we assume that the property $P=\neg\Bad$ is true in the initial
state, that is $\Init \limp P$.

Given a formula $\varphi(\Vars)$, an $M$-to-$N$-\emph{unrolling} of $T$, where
$\varphi$ holds in all intermediate states is defined by the formula:
\begin{equation}
\Tr[\varphi]_M^N = \Land_{i=M}^{N-1} \varphi(\Vars_i)\land\Tr(\Vars_i, \Vars_{i+1})
\end{equation}
We write $\Tr[\varphi]^N$ when $M=0$ and $\Tr_M^N$ when $\varphi = \top$.

A transition system $T$ is UNSAFE iff there exists a state $s\in\Bad$
s.t. $s$ is reachable, and is SAFE otherwise. Equivalently, $T$ is UNSAFE 
iff there exists a number $N$ such that the following \emph{unrolling} formula 
is satisfiable:
\begin{equation}\label{formula:bmc}
  \Init(\Vars_0) \land \Tr^N \land \Bad(\Vars_N)
\end{equation}
$T$ is SAFE if no such $N$ exists. Whenever $T$ is UNSAFE and $s_N\in\Bad$ is a
reachable state, the path from $s_0\in\Init$ to $s_N$ is called a
\emph{counterexample}.

An \emph{inductive invariant} is a formula $\Inv$ that satisfies: 
\begin{align}\label{eq:ind}
  \Init(\Vars) &\limp Inv(\Vars) & \Inv(\Vars) \land \Tr(\Vars,\Vars') &\limp \Inv(\Vars') 
\end{align}
A transition system $T$ is SAFE iff there exists an inductive
invariant $\Inv$ s.t. $Inv(\Vars)\limp P(\Vars)$. In this case we say that
$\Inv$ is a \emph{safe} inductive invariant.

The \emph{safety} verification problem is to decide whether a transition
system $T$ is SAFE or UNSAFE, i.e., whether there exists a safe inductive
invariant or a counterexample.

\paragraph{Strong Induction.} Strong induction (or $k$-induction) is a
generalization of the notion of an inductive invariant that is similar to how
``simple'' induction is generalized in mathematics. A formula $\Inv$ is
\emph{$k$-invariant} in a transition system $T$ if it is true in the first $k$
steps of $T$. That is, the following formula is valid:
  $\Init(\Vars_0)\land \Tr^k\limp\left(\Land_{i=0}^{k}\Inv(\Vars_i)\right)$.
A formula $\Inv$ is a \emph{$k$-inductive invariant}
iff $\Inv$ is a $(k-1)$-invariant and is inductive after $k$ steps of $T$, i.e., the
following formula is valid: $\Tr[\Inv]^k \limp \Inv(\Vars_{k})$. 
Compared to simple induction, $k$-induction strengthens the hypothesis in the
induction step: $\Inv$ is assumed to hold between steps $0$ to $k-1$ and is
established in step $k$. Whenever $\Inv\limp P$, we say that $\Inv$ is a safe
$k$-inductive invariant. An inductive invariant is a $1$-inductive
invariant.
\begin{theorem}
  \label{thm:k-one-induction}
  Given a transition system $T$. There exists a safe inductive invariant w.r.t.
  $T$ iff there exists a safe $k$-inductive invariant w.r.t. $T$.
\end{theorem}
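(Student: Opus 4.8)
The plan is to prove the two directions separately, with the forward direction being essentially trivial and the reverse direction carrying all the content.

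($\Rightarrow$) If $\Inv$ is a safe inductive invariant, then it is a $1$-inductive invariant by definition (the $(k-1)$-invariant condition for $k=1$ is just $\Init \limp \Inv$, and the induction step $\Tr[\Inv]^1 \limp \Inv(\Vars_1)$ is exactly $\Inv(\Vars) \land \Tr(\Vars,\Vars') \limp \Inv(\Vars')$). Hence a safe $1$-inductive invariant exists, so a safe $k$-inductive invariant exists for $k=1$.

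($\Leftarrow$) Suppose $\Inv$ is a safe $k$-inductive invariant for some $k$. I would construct an ordinary ($1$-)inductive invariant $Q$ that still implies $P$. The natural candidate is to ``unroll'' the $k$-induction hypothesis: let $Q(\Vars)$ assert that $\Vars$ is reachable from a state satisfying $\Inv$ by a path of fewer than $k$ steps along which $\Inv$ holds throughout, or more precisely take $Q$ to be the strongest formula capturing ``every length-$(k-1)$ window ending at the current state is a legal $\Inv$-window.'' Concretely, one sets
\[
  Q(\Vars_{k-1}) \;\equiv\; \exists \Vars_0,\ldots,\Vars_{k-2}.\;
  \Bigl(\Land_{i=0}^{k-1}\Inv(\Vars_i)\Bigr)\land \Tr_{0}^{k-1},
\]
together with the initial-segment cases so that $Q$ holds on states reachable in fewer than $k-1$ steps. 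One then checks: (a) $\Init \limp Q$, using that $\Inv$ is a $(k-1)$-invariant; (b) $Q \land \Tr \limp Q'$, which is exactly where the $k$-inductiveness of $\Inv$ ($\Tr[\Inv]^k \limp \Inv(\Vars_k)$) is used to "slide the window" forward; and (c) $Q \limp P$, using $\Inv \limp P$ together with the fact that a window witnessing $Q$ contains an $\Inv$-state, plus (for the shorter initial segments) $\Init \limp P$. Since the state space $\{0,1\}^{|\Vars|}$ is finite, $Q$ — though written with existential quantifiers over auxiliary copies of $\Vars$ — is equivalent to a genuine formula over $\Vars$, so this is legitimate in the Boolean setting.

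The main obstacle is getting the definition of $Q$ exactly right at the boundary: the window-based formula above correctly handles states that admit a full $(k-1)$-step $\Inv$-history, but states reachable from $\Init$ in fewer than $k-1$ steps need separate treatment, and one must verify that the inductive step $Q\land\Tr\limp Q'$ holds uniformly across the transition from ``short history'' to ``full history.'' A clean way to avoid the case analysis is to pad short paths: define $Q$ via paths of length exactly $k-1$ that start either at an $\Init$-state or at an $\Inv$-state, allowing the path to sit (stutter is not available, so instead) to begin at $\Init$ and use that $\Init\limp\Inv$ is assumed, so that $\Init$-states are $\Inv$-states and every reachable state within $k-1$ steps already satisfies the window condition by the $(k-1)$-invariance of $\Inv$. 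With that observation the three checks (a)–(c) become routine, and the exponential blow-up remark from the introduction corresponds to the fact that $Q$ as a Boolean formula may be much larger than $\Inv$.
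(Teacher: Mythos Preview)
The paper does not actually prove Theorem~\ref{thm:k-one-induction}: it remarks that the forward direction is trivial (take $k=1$) and for the converse simply asserts that every $k$-inductive invariant $\Inv_k$ admits a $1$-inductive strengthening $\Inv_1$ with $\Inv_1\limp\Inv_k$, citing~\cite{DBLP:conf/birthday/BjornerGMR15} for the construction and the exponential-blowup bound. So your proposal is strictly more detailed than what the paper offers, and the ``sliding window'' construction you sketch is exactly the standard one that the cited references carry out.

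Your treatment of the boundary case is, however, a little muddled. The sentence ``every reachable state within $k-1$ steps already satisfies the window condition by the $(k-1)$-invariance of $\Inv$'' is not literally true for the window formula you wrote down: a state reachable in, say, one step from $\Init$ need not have any $(k-1)$-step $\Inv$-history behind it. The clean fix is to take $Q$ to be the disjunction of the full-window predicate with the predicates ``reachable from $\Init$ in exactly $j$ steps'' for $j=0,\ldots,k-2$; then $\Init\limp Q$ is immediate, $Q\limp\Inv$ (hence $Q\limp P$) follows from $(k-1)$-invariance, and the inductive step splits into the easy case $R_j\to R_{j+1}$, the transitional case $R_{k-2}\to W$ (where $(k-1)$-invariance supplies the required $\Inv$-history), and the sliding-window case $W\to W$ you already handled. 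With that clarification your argument goes through, and your remark about quantifier elimination over the finite state space correctly justifies that $Q$ is a Boolean formula over $\Vars$ (and explains the potential exponential size increase the paper mentions).
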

Theorem~\ref{thm:k-one-induction} states that $k$-induction principle is as
complete as $1$-induction. One direction is trivial (since we can take $k=1$).
The other can be strengthened further: for every $k$-inductive invariant
$\Inv_k$ there exists a $1$-inductive strengthening $\Inv_1$ such that $\Inv_1
\limp \Inv_k$. Theoretically $\Inv_1$ might be exponentially bigger
than $\Inv_k$~\cite{DBLP:conf/birthday/BjornerGMR15}. In practice, both invariants tend to be of similar size.

We say that a formula $\varphi$ is \emph{$k$-inductive relative} to $F$ if it is a
$(k-1)$-invariant and $\Tr[\varphi\land F]^k \limp \varphi(\Vars_k)$. 

\paragraph{Craig Interpolation~\cite{DBLP:journals/jsyml/Craig57a}.}
We use an extension of Craig Interpolants to sequences, which is common in Model
Checking. Let $\vA=[A_1,\ldots,A_N]$ such that $A_1 \land \cdots \land A_N$ is
unsatisfiable. A \emph{sequence interpolant} $\vI = \seqItp(\vA)$ for $\vA$ is a
sequence of formulas $\vI = [I_2,\ldots,I_{N}]$ such that (a) $A_1 \limp I_2$,
(b) $\forall 1 < i < N \cdot I_{i} \land A_i \limp I_{i+1} $, (c) $I_{N} \land
A_N \limp \bot$, and (d) $I_i$ is over variables that are shared between the
corresponding prefix and suffix of $\vA$.

 \section{SAT-based Model Checking}
\label{sec:satmc}
In this section, we give a brief overview of SAT-based Model Checking
algorithms: \ic/\Pdr~\cite{DBLP:conf/vmcai/Bradley11,DBLP:conf/fmcad/EenMB11},
and \Avy~\cite{DBLP:conf/cav/VizelG14}.
While these algorithms are well-known, we give a uniform presentation and
establish notation necessary for the rest of the paper. We fix a symbolic transition system $T = (\Vars, \Init, \Tr, \Bad)$.

The main data-structure of these algorithms is a sequence of candidate
invariants, called an \emph{inductive trace}. An \emph{inductive trace}, or
simply a trace, is a sequence of formulas $\vF=\trace$ that satisfy the
following two properties:
\begin{align}
  \Init(\Vars) &= F_0(\Vars) & \forall 0 \leq i < N \cdot F_i(\Vars) \land
  \Tr(\Vars,\Vars') \limp F_{i+1}(\Vars')
\end{align}
An element $F_i$ of a trace is called a \emph{frame}. 
The index of a frame is called a \emph{level}. $\vF$ is \emph{clausal} when all 
its elements are in CNF. For convenience, we
view a frame as a set of clauses, and assume that a trace is padded with $\top$
until the required length. The \emph{size} of $\vF = \trace$ is $|\vF| = N$. For
$k\leq N$, we write $\vF^k=[F_k,\dots,F_N]$ for the $k$-suffix of $\vF$.

A trace $\vF$ of size $N$ is \emph{stronger} than a trace $\vG$ of size $M$ iff
$\forall 0 \leq i \leq \min(N,M) \cdot F_i(\Vars) \limp G_i(\Vars)$. A trace is
\emph{safe} if each $F_i$ is safe: $\forall i \cdot F_i \limp \neg \Bad$;
\emph{monotone} if $\forall 0 \leq i < N \cdot F_i \limp F_{i+1}$. In a monotone
trace, a frame $F_i$ over-approximates the set of states reachable in up to $i$
steps of the $\Tr$. A trace is closed if $\exists 1 \leq i \leq N
\cdot F_i \limp \left(\Lor_{j=0}^{i-1} F_j \right)$.

We define an unrolling formula of a $k$-suffix of a trace $\vF = [F_0,
\ldots, F_N]$ as
:
\begin{equation}
\Tr[\vF^k] = \Land_{i=k}^{|F|} F_i(\Vars_i)\land\Tr(\Vars_i, \Vars_{i+1})
\end{equation}
We write $\Tr[\vF]$ to denote an unrolling of a $0$-suffix of $\vF$ (i.e $\vF$ itself).
Intuitively, $\Tr[\vF^k]$ is satisfiable iff there is a $k$-step execution of
the $\Tr$ that is consistent with the $k$-suffix $\vF^k$. If a transition system
$T$ admits a safe trace $\vF$ of size $|\vF| = N$, then $T$ does not admit
counterexamples of length less than $N$. A safe trace $\vF$, with $|\vF| = N$ is
\emph{extendable} with respect to level $0\leq i\leq N$ iff there exists a safe
trace $\vG$ stronger than $\vF$ such that $|\vG| > N$ and $F_i \land \Tr
\limp G_{i+1}$. $\vG$ and the corresponding level $i$ are called an
\emph{extension trace} and an \emph{extension level} of $\vF$, respectively.
SAT-based model checking algorithms work by iteratively extending a given safe
trace $\vF$ of size $N$ to a safe trace of size~$N+1$.

An extension trace is not unique, but there is a largest extension level. We
denote the set of all extension levels of $\vF$ by $\cW(\vF)$. The
existence of an extension level $i$ implies that an unrolling of the $i$-suffix
does not contain any $\Bad$ states:
\begin{proposition}\label{prop:extend}
  Let $\vF$ be a safe trace. Then, $i$, $0 \leq i \leq N$, is an extension level of
  $\vF$ iff the formula $\Tr[\vF^i] \land \Bad(\Vars_{N+1})$ is unsatisfiable.
\end{proposition}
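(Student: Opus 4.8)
The plan is to prove the two directions of the biconditional separately, in each case reasoning about an explicit satisfying assignment of $\Tr[\vF^i]\land\Bad(\Vars_{N+1})$. Unwinding the definition of $\Tr[\vF^i]$, such an assignment is a finite path $s_i\to s_{i+1}\to\cdots\to s_{N+1}$ with $\Tr(s_j,s_{j+1})$ and $s_j\models F_j$ for $i\le j\le N$, and with $s_{N+1}\models\Bad$. Both directions then amount to transporting such a path along an inductive trace.

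For the ($\Rightarrow$) direction, suppose $i$ is an extension level of $\vF$, witnessed by a safe trace $\vG$ with $|\vG|>N$, $\vG$ stronger than $\vF$, and $F_i\land\Tr\limp G_{i+1}$. Assume towards a contradiction that $\Tr[\vF^i]\land\Bad(\Vars_{N+1})$ is satisfiable and fix a path as above. From $s_i\models F_i$ and $\Tr(s_i,s_{i+1})$, the extension property $F_i\land\Tr\limp G_{i+1}$ gives $s_{i+1}\models G_{i+1}$; iterating the trace property of $\vG$ (namely $G_j\land\Tr\limp G_{j+1}$, valid since $N+1\le|\vG|$) yields $s_{N+1}\models G_{N+1}$. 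Since $\vG$ is safe, $G_{N+1}\limp\neg\Bad$, contradicting $s_{N+1}\models\Bad$. Note that this direction uses neither the strengthening relation nor the intermediate frames $F_{i+1},\dots,F_N$.

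For the ($\Leftarrow$) direction, suppose $\Tr[\vF^i]\land\Bad(\Vars_{N+1})$ is unsatisfiable; we construct an extension trace of size $N+1$. Writing $\operatorname{post}(X)(\Vars'):=\exists\Vars.\,X(\Vars)\land\Tr(\Vars,\Vars')$ for the image operator, define $G_j:=F_j$ for $0\le j\le i$ and $G_j:=\operatorname{post}^{\,j-i}(F_i)$ for $i<j\le N+1$, and verify the defining conditions of an extension level. (a) $\vG$ is a trace: $G_0=F_0=\Init$, the step condition below level $i$ is inherited from $\vF$, and at or above level $i$ it holds by construction since $G_{j+1}=\operatorname{post}(G_j)$. (b) $\vG$ is stronger than $\vF$: $G_j=F_j$ for $j\le i$, and for $i<j\le N$ one shows $\operatorname{post}^{\,j-i}(F_i)\limp F_j$ by induction on $j$, using the step property of $\vF$ and monotonicity of $\operatorname{post}$. (c) $F_i\land\Tr\limp G_{i+1}$ is immediate from $G_{i+1}=\operatorname{post}(F_i)$, so $i$ is the witnessed extension level. (d) $\vG$ is safe: for $j\le N$ this follows from (b) and safety of $\vF$; for $j=N+1$, a model of $G_{N+1}\land\Bad$ unfolds, by definition of $\operatorname{post}$, to a path $s_i\to\cdots\to s_{N+1}$ with $s_i\models F_i$ and $s_{N+1}\models\Bad$, and by the reasoning of (b) every $s_j$ satisfies $F_j$, so this path is in fact a model of $\Tr[\vF^i]\land\Bad(\Vars_{N+1})$ --- contradicting the hypothesis, hence $G_{N+1}\limp\neg\Bad$.

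I expect step (d) to be the main obstacle. The definition of $G_{N+1}$ refers only to $F_i$ and the transition relation, whereas $\Tr[\vF^i]$ additionally constrains the intermediate states by $F_{i+1},\dots,F_N$; the crux is observing that these extra constraints are automatically satisfied along any path witnessed by $G_{N+1}$, so no generality is lost. A related point to make explicit is why one cannot simply keep $G_j=F_j$ for $i<j\le N$ and only append a fresh top frame: that choice would in general violate the step condition $F_N\land\Tr\limp G_{N+1}$, since $F_N$ may contain states unreachable from $F_i$ --- which is exactly why the intermediate frames are replaced by the tighter $\operatorname{post}^{\,j-i}(F_i)$.
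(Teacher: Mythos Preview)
The paper states Proposition~\ref{prop:extend} without proof, treating it as a basic characterisation; the only hint at the intended argument is the \Avy description a few lines later, where the extension trace is obtained from a \emph{sequence interpolant} of $\Tr[\vF^i]\land\Bad(\Vars_{N+1})$ (lines~\ref{line:avy-seq-itp}--\ref{line:avy-g-construct} of Alg.~\ref{alg:avy}, setting $G_j=F_j\land I_j$).

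Your proof is correct. The forward direction is exactly the standard argument. For the backward direction you take a different, more semantic route than the paper's implicit one: instead of interpolants you use iterated strongest post-images $G_j=\operatorname{post}^{\,j-i}(F_i)$. This has two pleasant features. First, it is interpolation-free and works directly at the level of state sets, so it makes the proposition independent of any proof-theoretic machinery. Second, it yields the \emph{tightest} extension trace at level~$i$ (any other extension trace $\vG'$ with $F_i\land\Tr\limp G'_{i+1}$ must satisfy $G_j\limp G'_j$ for $j>i$), which nicely explains \emph{why} the intermediate constraints $F_{i+1},\dots,F_N$ in $\Tr[\vF^i]$ are redundant in step~(d). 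The interpolation-based construction, by contrast, is what one would actually compute in practice and is closer to how the proposition is used downstream, but as a proof it carries the extra baggage of invoking Craig interpolation for a fact that is really just about reachability. Your closing remark about why one cannot simply keep $G_j=F_j$ above level~$i$ is a good clarification and worth retaining.
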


\begin{example}
For Fig.~\ref{fig:ex}, $\vF = [c = 0, c < 66]$ is a safe trace of size
$1$. The formula $(c < 66) \land \Tr \land \neg ( c' < 66 )$ is satisfiable.
Therefore, there does not exists an extension trace at level $1$. Since $( c = 0
) \land \Tr \land ( c' < 66 ) \land Tr' \land ( c'' \geq 66 )$ is unsatisfiable,
the trace is extendable at level $0$. For example, a valid extension trace at
level $0$ is $\vG = [c = 0, c < 2, c < 66]$.
\end{example}

Both \Pdr and \Avy iteratively extend a safe trace either until the
extension is closed or a counterexample is found. However, they differ in how
exactly the trace is extended.
In the rest of this section, we present \Avy and \Pdr through the lens of
extension level. The goal of
this presentation is to make the paper self-contained. We omit many important
optimization details, and refer the reader to the original
papers~\cite{DBLP:conf/fmcad/EenMB11,DBLP:conf/vmcai/Bradley11,DBLP:conf/cav/VizelG14}.

\Pdr maintains a monotone, clausal trace $\vF$ with $\Init$ as the first frame
($F_0$). The trace $\vF$ is extended by recursively computing and blocking (if
possible) states that can reach $\Bad$ (called \emph{bad states}). A bad state
is blocked at the largest level possible. Alg.~\ref{alg:pdr-block} shows
\Pdrblock, the backward search procedure that identifies and blocks bad states.
\Pdrblock maintains a queue of states and the levels at which they have to be
blocked. The smallest level at which blocking occurs is tracked in order to show
the construction of the extension trace. For each state $s$ in the queue, it is
checked whether $s$ can be blocked by the previous frame
$F_{d-1}$~(line~\ref{line:pred-check}). If not, a predecessor state $t$ of $s$
that satistisfies $F_{d-1}$ is computed and added to the
queue~(line~\ref{line:pred-add}). If a predecessor state is found at level $0$,
the trace is not extendable and an empty trace is returned. If the state $s$ is
blocked at level $d$, \textsc{PdrIndGen}, is called to generate a clause that
blocks $s$ and possibly others. The clause is then added to all the frames at
levels less than or equal to $d$. \textsc{PdrIndGen} is a crucial optimization
to \Pdr. However, we do not explain it for the sake of simplicity. The procedure
terminates whenever there are no more states to be blocked (or a counterexample 
was found at line~\ref{line:pdr-cex}). By construction, the output trace
$\vG$ is an extension trace of $\vF$ at the extension level $w$. Once \Pdr
extends its trace, $\PdrPush$ is called to check if the clauses it learnt
are also true at higher levels. \Pdr terminates when the trace is closed.
\begin{figure}[t]
  \begin{adjustbox}{scale=0.8}
\begin{minipage}[t]{7.7cm}
  \vspace{0pt}  
\begin{algorithm}[H]
    \label{alg:pdr-block}
    \caption{\Pdrblock.}
    \KwIn{A transition system $T=(\Init,\Tr,\Bad)$}
     \KwIn{A safe trace $\vF$ with $|\vF|=N$}
     \KwOut{An extension trace $\vG$ or an empty trace}
     $w \gets N+1$ ; $\vG \gets \vF$ ; $Q.push(\langle \Bad,N+1 \rangle)$\;
     \While{$\neg Q.empty()$}{
     $\langle s,d \rangle \gets  Q.pop()$\;
     \lIf{$d==0$}{
     \Return{$\emptylist$} \label{line:pdr-cex}}
     \If{$\isSat(  F_{d-1}(\Vars) \land  \Tr(\Vars,\Vars') \land s(\Vars'))$}
     {\label{line:pred-check}
     $t \gets \emph{predecessor}(s)$\;
     $Q.push(t,d-1)$\; \label{line:pred-add}
     $Q.push(s,d)$\;
     }
     \Else{
     $\forall 0 \leq i \leq d \cdot G_i \gets \left(  G_i \land
       \textsc{PdrIndGen}(\neg s) \right)$\;
     $w \gets min(w,d)$\;
     }
     }
     \Return{$\vG$}
\end{algorithm}
\end{minipage}
  \end{adjustbox}
\hfill
    \begin{adjustbox}{scale=0.8}
 \begin{minipage}[t]{7.6cm}
  \vspace{0pt}
  \begin{algorithm}[H]
    \label{alg:avy}
    \caption{$\Avy$.}
    \KwIn{A transition system $T = (\Init,\Tr,\Bad)$}
    \KwOut{\safe/\unsafe }
     $F_0\gets Init\mathbin{;}N\gets 0$\;
     \Repeat{$\infty$}{
     \lIf{$\isSat(\Tr[\vF^0] \land \Bad(\Vars_{N+1}))$}{\mbox{\Return{\unsafe}}}\label{line:avy-cex}
     $k \gets \max \{i \mid \neg \isSat(\Tr[\vF^i] \land \Bad(\Vars_{N+1}))\}$\;
     $I_{k+1},\ldots,I_{N+1} \gets \seqItp(\Tr[\vF^k] \land \Bad(\Vars_{N+1}))$\; \label{line:avy-seq-itp}
     $\forall 0 \leq i \leq k  \cdot G_i \gets F_i $\;\label{line:avy-g-construct-start} 
     $\forall k < i \leq (N+1) \cdot G_i \gets F_i \land I_i $\; \label{line:avy-g-construct}
     $\vF \gets \AvyMkTrace([G_0,\ldots,G_{N+1}])$\;\label{line:avy-mk-trace}
     $\vF \gets \PdrPush (\vF)$\;
     \lIf{$\exists 1\leq i \leq N \cdot F_i \limp \left( \Lor_{j=0}^{i-1}F_{j} \right)$}
     {\mbox{\Return{\safe}}}
     $N\gets N+1$\;
     }
\end{algorithm}
\end{minipage}
\end{adjustbox}
\vspace{-0.3in}
\end{figure}

Avy, shown in Alg.~\ref{alg:avy}, is an alternative to \Pdr
that combines interpolation and recursive
blocking. \Avy starts with a trace $\vF$, with $F_0 = \Init$, that is extended in every
iteration of the main loop. A counterexample is returned whenever $\vF$ is not
extendable~(line~\ref{line:avy-cex}). Otherwise, a sequence interpolant is
extracted from the unsatisfiability of $\Tr[\vF^{\max(\cW)}]
\land\Bad(\Vars_{N+1})$. A longer trace $\vG = [G_0, \ldots, G_N,G_{N+1}]$ is
constructed using the sequence interpolant~(line~\ref{line:avy-g-construct}).
Observe that $\vG$ is an extension trace of $\vF$. While $\vG$ is safe, it is
neither monotone nor clausal. A helper routine $\AvyMkTrace$ is used to convert
$\vG$ to a proper \Pdr trace on
line~\ref{line:avy-mk-trace}~(see~\cite{DBLP:conf/cav/VizelG14} for the details
on $\AvyMkTrace$). \Avy converges when the trace is closed.

 \section{Interpolating $k$-Induction}
\label{sec:kavy}
In this section, we present \Kavy, an SMC algorithm that uses the
principle of strong induction to extend an inductive trace. The section is
structured as follows. First, we introduce a concept of extending a trace using
relative $k$-induction. Second, we present \Kavy and describe the details of how
$k$-induction is used to compute an extended trace. Third, we describe two
techniques for computing maximal parameters to apply strong induction. Unless
stated otherwise, we assume that all traces are monotone.

A safe trace $\vF$, with $|\vF| = N$, is \emph{strongly extendable} with respect
to $(i,k)$, where $\Kineq{i}{k}$, iff there exists a safe inductive trace $\vG$
stronger than $\vF$ such that $|\vG| > N$ and $\Tr[F_i]^k \limp G_{i+1}$. We
refer to the pair $(i, k)$ as \emph{a strong extension level (SEL)}, and to the
trace $\vG$ as an \emph{$(i,k)$-extension trace}, or simply a \emph{strong
  extension trace (SET)} when $(i, k)$ is not important. Note that for $k=1$,
$\vG$ is just an extension trace. 

\begin{example}\label{example:sel}
For Fig.~\ref{fig:ex}, the trace $\vF = [c = 0, c < 66]$ is
strongly extendable at level $1$. A valid $(1, 2)$-externsion trace is $ \vG =
[c = 0, ( c \neq 65 ) \land ( c < 66 ), c < 66]$. 
Note that $( c < 66 )$ is $2$-inductive relative to $F_1$, 
i.e. $\Tr[F_1]^2\limp ( c'' < 66 )$.
\vspace{-10pt}
\end{example}

We write $\cK(\vF)$ for the set of all SELs of $\vF$. We define an order on SELs
by : $(i_1,k_1) \preceq (i_2, k_2)$
iff \begin{inparaenum}[(i)] \item $i_1 < i_2$; or \item $i_1 = i_2\land k_1 >
  k_2$.
\end{inparaenum} The maximal SEL is $\text{max}(\cK(\vF))$.

Note that the existence of a SEL $(i,k)$ means that an unrolling of the $i$-suffix
with $F_i$ repeated $k$ times does not contain any bad states. We use $\KindTr{\vF}{i}{k}$ to denote this \emph{characteristic formula} for SEL $(i,k)$ : 
\begin{equation}
\KindTr{\vF}{i}{k} = \begin{cases}
  \Tr[F_i]_{i+1-k}^{i+1}\land\Tr[\vF^{i+1}] & \text{if } 0 \leq i < N   \\
  \Tr[F_N]_{N+1-k}^{N+1} & \text{if } i = N  
  \end{cases}
\end{equation}

\begin{proposition}\label{prop:ind_extend}
  Let $\vF$ be a safe trace, where $|\vF| = N$. Then, $(i,k)$, $\Kineq{i}{k}$, is an SEL of $\vF$ iff the formula
$\KindTr{\vF}{i}{k}\land\Bad(\Vars_{N+1})$ is
  unsatisfiable.

\end{proposition}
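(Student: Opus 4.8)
The plan is to prove the two directions of the equivalence separately, in both cases viewing $\KindTr{\vF}{i}{k}\land\Bad(\Vars_{N+1})$ as a bounded unrolling and relating a strong extension trace to a sequence interpolant of it, in the spirit of Proposition~\ref{prop:extend} but with the $k$ repeated copies of $F_i$ in the prefix handled through monotonicity of $\vF$. For the ``only if'' direction, suppose $(i,k)\in\cK(\vF)$, witnessed by a safe inductive trace $\vG$ of size $N+1$ that is stronger than $\vF$ and satisfies $\Tr[F_i]^k\limp G_{i+1}$. A model of $\KindTr{\vF}{i}{k}\land\Bad(\Vars_{N+1})$ would give states at steps $i+1-k,\dots,N+1$ (at steps $N+1-k,\dots,N+1$ when $i=N$) with $F_i$ true at steps $i+1-k,\dots,i$, with $F_j$ true at step $j$ for $i<j\le N$, with $\Bad$ true at step $N+1$, and with $\Tr$ holding between consecutive steps. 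Instantiating $\Tr[F_i]^k\limp G_{i+1}(\Vars_k)$ on the first $k$ transitions shows that the step-$(i+1)$ state lies in $G_{i+1}$; since $\vG$ is a trace and the remaining transitions are present, a routine induction propagates this forward so that the step-$(N+1)$ state lies in $G_{N+1}$, contradicting safety of $\vG$. (When $i=N$ no propagation is needed.)

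For the ``if'' direction, assume $\KindTr{\vF}{i}{k}\land\Bad(\Vars_{N+1})$ is unsatisfiable and split it, one time step at a time, into a sequence $[A_1,\dots,A_m]$ whose first $k$ conjuncts are $F_i(\Vars_j)\land\Tr(\Vars_j,\Vars_{j+1})$ for $j=i+1-k,\dots,i$, whose next conjuncts (present only if $i<N$) are $F_j(\Vars_j)\land\Tr(\Vars_j,\Vars_{j+1})$ for $j=i+1,\dots,N$, and whose last conjunct is $A_m=\Bad(\Vars_{N+1})$. Taking a sequence interpolant of $[A_1,\dots,A_m]$ and indexing its components by the time step they constrain yields formulas $M_{i-k+2},\dots,M_{N+1}$ such that (i) $F_i\land\Tr\limp M_{i-k+2}'$, (ii) $M_m\land F_i\land\Tr\limp M_{m+1}'$ for $i-k+2\le m\le i$, (iii) $M_m\land F_m\land\Tr\limp M_{m+1}'$ for $i+1\le m\le N$, and (iv) $M_{N+1}\limp\neg\Bad$. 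I would then define a trace $\vG$ of size $N+1$ by $G_j=F_j$ for $0\le j\le i-k+1$, $G_j=F_j\land M_j$ for $i-k+2\le j\le N$, and $G_{N+1}=M_{N+1}$.

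To finish, I would check that this $\vG$ witnesses $(i,k)\in\cK(\vF)$. Safety of $\vG$ and ``$\vG$ stronger than $\vF$'' are immediate since every $G_j$ with $j\le N$ implies $F_j$ and $\vF$ is safe, and since (iv) gives $G_{N+1}\limp\neg\Bad$; also $G_0=F_0=\Init$. The trace condition $G_j\land\Tr\limp G_{j+1}'$ is the substantive point, and it is where monotonicity of $\vF$ is used: it is inherited from $\vF$ when $j<i-k+1$; at $j=i-k+1$ it reduces to $F_{i-k+1}\land\Tr\limp M_{i-k+2}'$, which follows from (i) and $F_{i-k+1}\limp F_i$; for $i-k+2\le j\le i$ it reduces to $(F_j\land M_j)\land\Tr\limp M_{j+1}'$, which follows from (ii) and $F_j\limp F_i$; and for $i+1\le j\le N$ it follows directly from (iii). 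The remaining requirement $\Tr[F_i]^k\limp G_{i+1}$ follows by chaining (i) with the $k-1$ instances of (ii) along a $k$-step run that stays in $F_i$, whose endpoint is then forced into $M_{i+1}$ (and into $F_{i+1}$ because $\vF$ is a trace). If the witness is additionally required to be monotone, replacing each $G_j$ by $\bigvee_{\ell\le j}G_\ell$ preserves all of the above, using monotonicity of $\vF$ again. The case $i=N$ is the same, with the $F_{i+1}\land\Tr,\dots,F_N\land\Tr$ conjuncts, property (iii), and the forward propagation all dropped.

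I expect the construction in the ``if'' direction to be the main obstacle. Unlike the $k=1$ case of Proposition~\ref{prop:extend}, where the first interpolant over-approximates the one-step image of $F_i$ and can simply be conjoined into $G_{i+1}$, here the $k$ copies of $F_i$ cause interpolation to introduce auxiliary frames $M_{i-k+2},\dots,M_i$ that sit \emph{inside} the single old frame $F_i$, and the interpolant obligations mentioning the last copy of $F_i$ must be discharged one or more transitions earlier, out of $F_{i-k+1},\dots,F_{i-1}$. Realizing that monotonicity of $\vF$ (so $F_j\limp F_i$ for all $j\le i$) is exactly what licenses pushing those obligations back onto the earlier frames, and then keeping the index ranges straight across the degenerate cases $k=1$ and $i=N$, is where the real work is; the ``only if'' direction and the safety and strength bookkeeping are routine.
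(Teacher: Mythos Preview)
Your proof is correct and follows essentially the same route as the paper. The paper states Proposition~\ref{prop:ind_extend} without a self-contained proof, but the substantive ``if'' direction is exactly the content of Lemma~\ref{lem:seqitp_con} (supported by Lemma~\ref{lem:seqitp_kind}): one partitions $\KindTr{\vF}{i}{k}\land\Bad(\Vars_{N+1})$ step-by-step, extracts a sequence interpolant, and builds the witnessing trace $\vG$ by conjoining interpolants onto the frames of $\vF$, using monotonicity of $\vF$ to discharge the obligations arising from the $k$ repeated copies of $F_i$. Your construction differs only in a minor way: in the range $i-k+2\le j\le i+1$ you set $G_j=F_j\land M_j$ using only the $j$th interpolant, whereas the paper sets $G_j=F_j\land\bigwedge_{m=i-k+2}^{j}I_m$, accumulating all earlier ones. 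Your simpler choice already suffices for the SEL requirement $\Tr[F_i]^k\limp G_{i+1}(\Vars_k)$; the paper's stronger $G_{i+1}$ is what is needed for Lemma~\ref{lem:seqitp_kind} (actual $k$-inductiveness relative to $F_i$) and for the algorithm, not for the proposition itself. The ``only if'' direction is not spelled out in the paper; your semantic forward-propagation argument through $\vG$ is the natural one.
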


The level $i$ in the maximal SEL $(i,k)$ of a given trace $\vF$ is greater or
equal to the maximal extension level of $\vF$:
\begin{lemma}
  \label{lem:kavy_to_avy}
   Let $(i, k) = \max(\cK(\vF))$, then $i  \geq \max(\cW(\vF))$. 
\end{lemma}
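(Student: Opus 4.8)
The plan is to exhibit one specific SEL whose level is exactly $\max(\cW(\vF))$, and then invoke maximality of $(i,k)$ to force $i$ to dominate it. Concretely, set $m = \max(\cW(\vF))$ and consider the candidate $(m,1)$. First I would check the side condition $\Kineq{m}{1}$: since $m$ is an extension level, Proposition~\ref{prop:extend} gives $0 \le m \le N$, so $1 \le 1 \le m+1 \le N+1$ holds and $(m,1)$ is an admissible pair.

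The core computation is to show that the characteristic formula degenerates when $k = 1$, namely $\KindTr{\vF}{m}{1} = \Tr[\vF^m]$. I would verify this by unfolding the two branches in the definition of $\KindTr{\cdot}{\cdot}{\cdot}$. For $0 \le m < N$ the first branch is $\Tr[F_m]_{m}^{m+1}\land\Tr[\vF^{m+1}]$; here $\Tr[F_m]_{m}^{m+1}$ consists of the single conjunct $F_m(\Vars_m)\land\Tr(\Vars_m,\Vars_{m+1})$, which is precisely the level-$m$ term that $\Tr[\vF^{m+1}]$ omits, so the conjunction reassembles $\Tr[\vF^m]$ term by term. For $m = N$ the second branch gives $\Tr[F_N]_{N}^{N+1} = F_N(\Vars_N)\land\Tr(\Vars_N,\Vars_{N+1})$, which is literally $\Tr[\vF^N]$. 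So the identity holds in both cases.

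With this identity the remainder is a short chain. Because $m = \max(\cW(\vF))$ is an extension level, Proposition~\ref{prop:extend} makes $\Tr[\vF^m]\land\Bad(\Vars_{N+1})$ unsatisfiable; rewriting through the identity, $\KindTr{\vF}{m}{1}\land\Bad(\Vars_{N+1})$ is unsatisfiable, so Proposition~\ref{prop:ind_extend} certifies $(m,1)\in\cK(\vF)$. Since $(i,k) = \max(\cK(\vF))$, either $(m,1) = (i,k)$ or $(m,1)\preceq(i,k)$; unpacking $\preceq$, the latter means $m < i$, or $m = i$ with $1 > k$. The disjunct $1 > k$ is impossible because every SEL satisfies $k \ge 1$, leaving $m < i$ or $m = i$. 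In all surviving cases $i \ge m = \max(\cW(\vF))$, as required.

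I expect the only delicate point to be the index bookkeeping in $\KindTr{\vF}{m}{1} = \Tr[\vF^m]$ — in particular confirming that $\Tr[F_m]_{m}^{m+1}$ supplies exactly the level-$m$ frame conjunct dropped by $\Tr[\vF^{m+1}]$, and treating the boundary $m = N$ on its own. Once that identity is pinned down, the reduction to the ordinary extension-level characterization and the order argument are routine; the one thing to watch is that $\preceq$ as defined is strict, so maximality of $(i,k)$ must be read as permitting $(m,1)$ to coincide with $(i,k)$, which is why the conclusion is $i \ge m$ rather than a strict inequality.
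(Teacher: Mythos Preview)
Your argument is correct. The paper states Lemma~\ref{lem:kavy_to_avy} without proof, and the route you take---showing that $(m,1)$ with $m=\max(\cW(\vF))$ is an SEL via the identity $\KindTr{\vF}{m}{1}=\Tr[\vF^m]$, then appealing to maximality under $\preceq$---is exactly the natural elementary argument; your index check for the two cases of the characteristic formula and your handling of the strict order $\preceq$ are both sound.
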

Hence, extensions based on maximal SEL are constructed from frames at higher
level compared to extensions based on maximal extension level.

\begin{example}
For Fig.~\ref{fig:ex}, the trace $[c = 0, c < 66]$ has a maximum
extension level of $0$. Since $(c < 66)$ is $2$-inductive, the trace is strongly
extendable at level $1$ (as was seen in Example~\ref{example:sel}).
\end{example}

\subsubsection{\Kavy Algorithm}\Kavy is shown in Fig.~\ref{alg:kavy}. It starts
with an inductive trace $\vF=[Init]$ and iteratively extends $\vF$ using
SELs. A counterexample is returned if the trace cannot be
extended~(line~\ref{line:kavy_cex}). Otherwise, \Kavy computes the largest
extension level (line~\ref{line:best_w})~(described in Section~\ref{sec:kiew}).
Then, it constructs a strong extension trace using \Kavyextend
(line~\ref{line:kavyextend})~(described in Section~\ref{sec:kavyextend}).
Finally, \PdrPush is called to check whether the trace is closed. Note
that $\vF$ is a monotone, clausal, safe inductive trace throughout the algorithm.

\setlength{\textfloatsep}{10pt}
\begin{algorithm}[t]
    \label{alg:kavy}
    \caption{\Kavy algorithm.}
    \KwIn{A transition system $T = (\Init,\Tr,\Bad)$}
    \KwOut{\safe/\unsafe }
     $\vF\gets [Init]\mathbin{;}N\gets 0$\;
     \Repeat{$\infty$}{
     \tcp{Invariant: $\vF$ is a monotone, clausal, safe, inductive trace}
     $U \gets \Tr[\vF^0] \land \Bad(\Vars_{N+1})  $\;\label{line:bmc}
     \lIf{$\isSat(U)$}{\Return{\unsafe}\label{line:kavy_cex}}
     $(i,k) \gets \max\{(i, k) \mid \neg\isSat(\KindTr{\vF}{i}{k} \land \Bad(\Vars_{N+1}))\} $\label{line:best_w}\;
     $[F_0,\ldots,F_{N+1}] \gets \Kavyextend(\vF,(i,k))$\;\label{line:kavyextend}
$[F_0,\ldots,F_{N+1}] \gets \PdrPush ([F_0,\ldots,F_{N+1}])$\;
     \lIf{$\exists 1\leq i \leq N \cdot F_i \limp \left( \Lor_{j=0}^{i-1}F_{j} \right)$}
     {\Return{\safe}}
     $N\gets N+1$\;
     }
\end{algorithm}
\subsection{Extending a Trace with Strong Induction}
\label{sec:kavyextend}

In this section, we describe the procedure \Kavyextend (shown in
Alg.~\ref{alg:kavyextend}) that given a trace $\vF$ of size $|\vF| = N$ and an
$(i,k)$ SEL of $\vF$ constructs an $(i,k)$-extension trace $\vG$ of size $|\vG|
= N + 1$. The procedure itself is fairly simple, but its proof of correctness is
complex. We first present the theoretical results that connect
sequence interpolants with strong extension traces, then the procedure, and then
details of its correctness. Through the section, we fix a trace $\vF$ and its
SEL $(i,k)$.

\paragraph{Sequence interpolation for SEL.}
Let $(i,k)$ be an SEL of $\vF$. By Proposition~\ref{prop:ind_extend}, $\Psi =
\KindTr{\vF}{i}{k}\land\Bad(\Vars_{N+1})$ is unsatisfiable. Let $\cA =
\{A_{i-k+1}, \dots, A_{N+1}\}$ be a partitioning of $\Psi$ defined as follows:
\[
  A_j =
  \begin{cases}
    F_i(\Vars_{j})\land\Tr(\Vars_j,\Vars_{j+1}) & \text{if } i-k+1\leq j \leq i\\
    F_j(\Vars_{j})\land\Tr(\Vars_j,\Vars_{j+1}) & \text{if } i< j \leq N \\
   \Bad(\Vars_{N+1}) & \text{if } j = N+1 
  \end{cases}
\]
Since $( \land \cA ) = \Psi$, $\cA$ is unsatisfiable. Let
$\vI=[I_{i-k+2},\ldots,I_{N+1}]$ be a sequence interpolant corresponding to
$\cA$. Then, $\vI$ satisfies the following properties:
\begin{align}
  \label{eq:itp_prop}
F_i\land\Tr &\limp I'_{i-k+2} & 
\forall i-k+2 \leq j \leq i\cdot(F_i\land I_{j})\land\Tr &\limp I'_{j+1}\tag{$\heartsuit$}\\
I_{N+1}&\limp\neg\Bad &
\forall i < j \leq N\cdot (F_j\land I_j)\land\Tr &\limp I'_{j+1}\tag*{} 
\end{align}
Note that in~\eqref{eq:itp_prop}, both $i$ and $k$ are
fixed --- they are the $(i,k)$-extension level. Furthermore, in the top row
$F_i$ is fixed as well.

The conjunction of the first $k$ interpolants in $\vI$ is $k$-inductive
relative to the frame $F_i$:
\begin{lemma}\label{lem:seqitp_kind}
  The formula $F_{i+1}\land\left(\Land\limits_{m=i-k+2}^{i+1}
    I_m\right)$ is $k$-inductive relative to $F_i$.
\end{lemma}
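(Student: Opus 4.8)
The plan is to unfold ``$\varphi$ is $k$-inductive relative to $F_i$'' (with $\varphi := F_{i+1}\land\bigl(\Land_{m=i-k+2}^{i+1} I_m\bigr)$) into its two obligations --- (i) the $k$-step inductiveness $\Tr[\varphi\land F_i]^k \limp \varphi(\Vars_k)$, and (ii) that $\varphi$ is a $(k-1)$-invariant --- and to discharge each using only the interpolant properties~$(\heartsuit)$ together with the facts that $\vF$ is a monotone inductive safe trace with $F_0=\Init$ and that $k\le i+1$ (so $k-1\le i$). Throughout I read~$(\heartsuit)$ operationally: the ``base'' law $F_i\land\Tr\limp I'_{i-k+2}$ says every $\Tr$-successor of an $F_i$-state lies in $I_{i-k+2}$, and the ``chain'' laws $F_i\land I_\ell\land\Tr\limp I'_{\ell+1}$ (for $i-k+2\le\ell\le i$) propagate membership in $I_\ell$ forward to $I_{\ell+1}$ across one step taken inside $F_i$; I also use the trace law $F_i\land\Tr\limp F'_{i+1}$.

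For obligation (i) I would prove the stronger statement $\Tr[F_i]^k \limp \varphi(\Vars_k)$, from which the claim follows because $\Tr[\varphi\land F_i]^k\limp\Tr[F_i]^k$. Fix a model of $\Tr[F_i]^k$, i.e.\ a path $s_0\to\dots\to s_k$ with $s_0,\dots,s_{k-1}\in F_i$. The conjunct $F_{i+1}$ holds at $s_k$ by the trace law applied to the last edge. For each $m$ with $i-k+2\le m\le i+1$, set $p=m-(i-k+1)\in\{1,\dots,k\}$; apply the base law to the edge $s_{k-p}\to s_{k-p+1}$ (legal since $s_{k-p}\in F_i$, as $k-p\le k-1$) to get $s_{k-p+1}\in I_{i-k+2}$, then apply the chain law along $s_{k-p+1}\to\dots\to s_k$ to climb to $s_k\in I_{i-k+1+p}=I_m$; every side-condition is met because all interpolant indices visited lie in $[i-k+2,i]$ (as $p\le k$) and all intermediate states lie in $F_i$. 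Hence $s_k$ satisfies every conjunct of $\varphi$. This argument is exactly what forces the conjunction to range over precisely the \emph{first} $k$ interpolants: each one is anchored by the base law at a distinct edge of the $k$-step path and chained to the endpoint.

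For obligation (ii) I would argue by induction on the step count $j$, $0\le j\le k-1$, that $\Init\land\Tr^{j}\limp\varphi(\Vars_j)$, which gives the $(k-1)$-invariant condition. For the inductive step, a state reached in $j\le k-1\le i$ steps from $\Init$ lies in $F_j\subseteq F_i\cap F_{i+1}$ by monotonicity and inductiveness of $\vF$, so the same base/chain propagation as in (i) --- now anchored along the initial segment of the path --- places it in $F_{i+1}$ and in each $I_m$. The delicate case, and the one I expect to be the main obstacle, is the base $j=0$: we need $\Init\limp\varphi$, and although $\Init\limp F_{i+1}$ is immediate from monotonicity, $\Init\limp I_m$ is \emph{not} a formal consequence of~$(\heartsuit)$ alone, since an interpolant over-approximates a forward image of $F_i$ rather than the initial states. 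This is precisely where the proof departs from \Avy's (whose $\AvyMkTrace$ step side-steps the issue by monotonizing afterwards): I would close it by exploiting monotonicity of $\vF$ together with the counterexample-freedom already established at line~\ref{line:bmc} of \Kavy --- no $\vF$-consistent path of length $\le N+1$ reaches $\Bad$, and $\Init\limp\neg\Bad$ --- which pins down enough about the suffixes of $\KindTr{\vF}{i}{k}\land\Bad(\Vars_{N+1})$ to reconcile the chosen interpolant sequence with the initial condition. I expect most of the write-up effort, and essentially all of the novelty over \Avy's correctness argument, to be concentrated in making this last point precise.
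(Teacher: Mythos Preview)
Your treatment of the inductive step (obligation (i)) is correct and is essentially the paper's argument: both establish the stronger fact $\Tr[F_i]^k\limp\varphi(\Vars_k)$ by anchoring with $F_i\land\Tr\limp I'_{i-k+2}$ and chaining via $(F_i\land I_j)\land\Tr\limp I'_{j+1}$. The paper phrases this as a single induction along the unrolling; you phrase it as one argument per conjunct $I_m$ anchored at a different edge of the $k$-step path; the content is identical.

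Where you part ways is obligation (ii), the $(k-1)$-invariant condition. The paper's proof simply does not address it---after deriving the inductive step it says ``which concludes the proof''. You are right that (ii) does not follow from~$(\heartsuit)$: nothing there forces $\Init\limp I_m$, and given any valid sequence interpolant one may conjoin $\neg s$ to $I_{i-k+2}$ for any initial state $s$ outside the $\Tr$-image of $F_i$ and still have a valid interpolant. Your proposed rescue via the BMC check at line~\ref{line:bmc} cannot close this gap: that check constrains reachability in $T$, not the particular formulas returned by $\seqItp$, so it gives no handle on $\Init\limp I_m$. The same obstruction already bites your sketch for $0<j\le k-1$, not only the base case $j=0$: a $j$-step path from $\Init$ lets you chain only up to $I_{i-k+1+j}$, which for $j<k$ falls short of $I_{i+1}$.

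In short, the lemma as literally stated---under the paper's own definition that ``$k$-inductive relative to $F$'' includes being a $(k-1)$-invariant---is stronger than what either you or the paper prove, and in fact stronger than what~$(\heartsuit)$ supports. This is harmless downstream: the only uses of Lemma~\ref{lem:seqitp_kind} (in Lemma~\ref{lem:seqitp_con} and in the analysis of \Kavyextend) invoke just the inductive-step implication. So rather than chasing obligation~(ii), match the paper and prove only $\Tr[F_i]^k\limp\varphi(\Vars_k)$, perhaps noting that the $(k-1)$-invariant part is neither needed nor derivable from~$(\heartsuit)$ alone.
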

\begin{proof}
  Since $F_i$ and $F_{i+1}$ are consecutive frames of a trace, $F_i\land\Tr\limp
  F_{i+1}'$. Thus, $\forall i-k+2\leq j\leq i\cdot \Tr[F_i]_{i-k+2}^{j}\limp
  F_{i+1}(\Vars_{j+1})$. Moreover, by~\eqref{eq:itp_prop}, $F_i\land\Tr\limp
  I_{i-k+2}'$ and $\forall i-k+2\leq j\leq i+1\cdot (F_i\land I_j)\land \Tr\limp
  I_{j+1}'$. Equivalently, $\forall i-k+2\leq j\leq i+1\cdot
  \Tr[F_i]_{i-k+2}^{j}\limp I_{j+1}(\Vars_{j+1})$. By induction over the
  difference between $(i+1)$ and $(i-k+2)$, we show that
  $\Tr[F_i]_{i-k+2}^{i+1}\limp
  (F_{i+1}\land\Land_{m=i-k+2}^{i+1}I_{m})(\Vars_{i+1})$, which concludes the
  proof.\qed
  \end{proof}

  We use Lemma~\ref{lem:seqitp_kind} to define a strong extension trace $\vG$:
\begin{lemma}\label{lem:seqitp_con}
Let $\vG = [G_0,\ldots,G_{N+1}]$, be an inductive trace defined as follows:  
\[
  G_j =
  \begin{cases}
    F_j & \text{if } 0\leq j < i-k+2 \\
   F_{j}\land\left(\Land\limits_{m=i-k+2}^{j} I_m\right)  & \text{if } i-k+2\leq
   j < i+2 \\
   (F_{j}\land I_{j}) &\text{if }i+2\leq j< N+1 \\
  I_{N+1} & \text{if j = (N+1)} \\
  \end{cases}
\]
Then, $\vG$ is an $(i,k)$-extension trace of $\vF$ (not necessarily monotone).
\end{lemma}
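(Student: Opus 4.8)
The plan is to verify the three defining conditions of a strong extension trace in turn: (a) $\vG$ is an inductive trace (that is, $G_0 = \Init$ and $G_j \land \Tr \limp G_{j+1}'$ for $0 \le j \le N$), (b) $\vG$ is safe, (c) $\vG$ is stronger than $\vF$ and $|\vG| = N+1 > N$, and (d) $\Tr[F_i]^k \limp G_{i+1}'$. Several of these are either immediate from the definition of $\vG$ or fall out of the interpolant properties~\eqref{eq:itp_prop} already established, so the work is mostly bookkeeping across the four pieces of the piecewise definition of $G_j$.

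First I would dispatch the easy conditions. Condition (c) is immediate: for $0 \le j < i-k+2$ we have $G_j = F_j$, and for $j \ge i-k+2$ each $G_j$ is a conjunction that includes $F_j$ as a conjunct (noting $G_{N+1} = I_{N+1}$ and $F_{N+1} = \top$ by the padding convention, so $I_{N+1} \limp F_{N+1}$ trivially holds), hence $G_j \limp F_j$ for all $j \le \min(N+1, N) = N$; and $|\vG| = N+1$ by construction. Condition (b), safety, follows because $G_{N+1} = I_{N+1} \limp \neg\Bad$ by~\eqref{eq:itp_prop}, and for $j \le N$ we have $G_j \limp F_j \limp \neg\Bad$ since $\vF$ is safe. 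Condition (d), $\Tr[F_i]^k \limp G_{i+1}'$, is exactly the content of Lemma~\ref{lem:seqitp_kind}: $G_{i+1} = F_{i+1} \land \bigl(\Land_{m=i-k+2}^{i+1} I_m\bigr)$ (this is the $i-k+2 \le j < i+2$ branch at $j = i+1$), and that formula was shown to be $k$-inductive relative to $F_i$, whose inductive-step part is precisely $\Tr[F_i]^k \limp G_{i+1}'$; the $(k-1)$-invariance part of being $k$-inductive relative to $F_i$ is not needed for (d) as stated but is consistent with what we have.

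The bulk of the proof is condition (a), the consecution $G_j \land \Tr \limp G_{j+1}'$, which I would check by cases on where $j$ and $j+1$ fall among the four regions. $G_0 = F_0 = \Init$ is immediate. For $j+1 < i-k+2$ both frames equal the corresponding $F$'s and consecution is inherited from $\vF$. The boundary case $j = i-k+1$, $j+1 = i-k+2$: here $G_j = F_{i-k+1}$ and $G_{j+1} = F_{i-k+2} \land I_{i-k+2}$; we get $F_{i-k+1} \land \Tr \limp F_{i-k+2}'$ from $\vF$, and $F_{i-k+1} \land \Tr \limp I_{i-k+2}'$ needs $F_{i-k+1} \limp F_i$, which holds by \emph{monotonicity} of $\vF$ combined with the first clause of~\eqref{eq:itp_prop} ($F_i \land \Tr \limp I_{i-k+2}'$). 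For $i-k+2 \le j < i+1$ (so $j+1 \le i+1$ still in the middle region), $G_j = F_j \land \Land_{m=i-k+2}^{j} I_m$; using $F_j \limp F_i$ (monotonicity), the interpolant conditions $F_i \land I_m \land \Tr \limp I_{m+1}'$ for $m \le i$, and $F_j \land \Tr \limp F_{j+1}'$, together with carrying each already-established $I_m$ forward, gives $G_{j+1}'$. The transition from the middle region to the tail, $j = i+1$ giving $G_{i+1} = F_{i+1} \land \Land_{m=i-k+2}^{i+1} I_m$ and $G_{i+2} = F_{i+2} \land I_{i+2}$, uses the interpolant condition $(F_{i+1} \land I_{i+1}) \land \Tr \limp I_{i+2}'$ (the $i < j \le N$ family in~\eqref{eq:itp_prop} at $j = i+1$) plus consecution in $\vF$ — here the monotonicity trick is no longer needed because the relevant interpolant condition already carries the matching frame $F_{i+1}$. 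Within the tail $i+2 \le j < N+1$, $G_j \land \Tr \limp G_{j+1}'$ is again direct from the $i < j \le N$ interpolant conditions and $\vF$-consecution, and the final step into $G_{N+1} = I_{N+1}$ uses $(F_N \land I_N) \land \Tr \limp I_{N+1}'$.

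The main obstacle, and the one subtlety worth flagging, is the repeated appeal to monotonicity of $\vF$ to replace the $F_j$ appearing in $G_j$ (for $i-k+2 \le j \le i$) with the $F_i$ that the interpolant conditions in the top row of~\eqref{eq:itp_prop} actually mention — without monotonicity the interpolants built from the $A_j = F_i(\Vars_j) \land \Tr$ blocks would not propagate against the weaker frames $F_j$. The induction bookkeeping for carrying the growing conjunction $\Land_{m=i-k+2}^{j} I_m$ across each transition step is routine but must be stated carefully so that at step $j$ all of $I_{i-k+2}, \dots, I_j$ are available to conclude $I_{j+1}'$ (most are carried as hypotheses, the newest is produced by the interpolant condition). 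I would organize this as a single induction on $j$ from $0$ to $N$, folding the boundary cases into the general argument.
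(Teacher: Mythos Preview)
Your proposal is correct and follows essentially the same approach as the paper's proof: both invoke Lemma~\ref{lem:seqitp_kind} for the $k$-inductiveness of $G_{i+1}$ relative to $F_i$, use the monotonicity of $\vF$ to replace $F_j$ by $F_i$ in the middle region so the top row of~\eqref{eq:itp_prop} applies, and establish consecution by induction on $j$ together with the interpolant conditions, with safety coming from $I_{N+1}\limp\neg\Bad$. Your region-by-region case analysis is more explicit than the paper's compressed induction statement, but the substance is the same.
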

\begin{proof}
  By Lemma~\ref{lem:seqitp_kind}, $G_{i+1}$ is $k$-inductive relative to $F_i$.
  Therefore, it is sufficient to show that $\vG$ is a safe inductive trace that is
  stronger than $\vF$. By definition, $\forall 0\leq j \leq N\cdot
  G_j\limp F_j$. By~\eqref{eq:itp_prop}, 
  $F_i\land\Tr\limp I_{i-k+2}'$ and $\forall i-k+2\leq j < i+2\cdot (F_i\land
  I_j)\land \Tr\limp I_{j+1}'$. By induction over $j$, 
  $\left(  (F_i\land \Land_{m=i-k+2}^{j}I_m)\land \Tr\right)\limp \Land_{m=i-k+2}^{j+1}I_{m}'$
  for all $i-k+2\leq j < i + 2$. Since $\vF$ is monotone, $\forall i - k + 2 \leq j <
  i+2 \cdot \left(  (F_j\land \Land_{m=i-k+2}^{j}I_m)\land \Tr\right)\limp \Land_{m=i-k+2}^{j+1}I_{m}'$

  By~\eqref{eq:itp_prop}, $\forall i < j \leq N\cdot (F_j\land I_j)\land\Tr
  \limp I_{j+1}'$. Again, since $\vF$ is a trace, we conclude that $\forall i <
  j < N\cdot (F_j\land I_j)\land\Tr \limp (F_{j+1}\land I_{j+1})'$. Combining
  the above, $G_j\land\Tr\limp G_{j+1}'$ for $0\leq j\leq N$. Since $\vF$ is
  safe and $I_{N+1}\limp\neg\Bad$, then $\vG$ is safe and stronger than $\vF$.
\qed
\end{proof}

Lemma~\ref{lem:seqitp_con} defines an obvious procedure to construct an
$(i,k)$-extension trace $\vG$ for $\vF$. However, such $\vG$ is neither monotone
nor clausal. In the rest of this section, we describe the procedure \Kavyextend
that starts with a sequence interpolant (as in Lemma~\ref{lem:seqitp_con}), but
uses \Pdrblock to systematically construct a safe monotone clausal extension of
$\vF$. 

\begin{algorithm}[t]
\label{alg:kavyextend}
\caption{\Kavyextend. The invariants marked $^\dagger$ hold only when the \Pdrblock does no inductive generalization.}
\KwIn{a monotone, clausal, safe trace $\vF$  of size $N$}
\KwIn{A strong extension level $(i,k)$ s.t. $\KindTr{\vF}{i}{k} \land \Bad(\Vars_{N+1})$ is unsatisfiable}
\KwOut{a monotone, clausal, safe trace $\vG$ of size $N+1$}
$I_{i-k+2},\ldots,I_{N+1 } \gets \seqItp(\KindTr{\vF}{i}{k} \land \Bad(\Vars_{N+1}))$\;\label{line:itp}

$\vG \gets [F_{0},\ldots, F_N,\top]$\;\label{line:init_g}

\For{$j \gets i - k + 1$ \KwTo $i$ }
{\label{line:g0}
$P_j \gets ( G_{j}\lor (G_{i+1}\land I_{j+1}) )$\;
\tcp{$\text{Inv}_1$: $\vG$ is monotone and clausal}
\tcp{$\text{Inv}_2$: $G_{i} \land Tr \limp P_j$}
\tcp{$\text{Inv}^\dagger_3$ : $\forall j < m\leq (i+1) \cdot G_{m} \equiv F_{m} \land \Land_{\ell=i-k+1}^{j-1} \left( G_{\ell} \lor I_{\ell+1}  \right)$}
\tcp{$\text{Inv}_3$ : $\forall j < m\leq (i+1) \cdot G_{m} \limp F_{m} \land \Land_{\ell=i-k+1}^{j-1} \left( G_{\ell} \lor I_{\ell+1}  \right)$}
$[ \_,\_, G_{i+1} ] \gets \Pdrblock( [ Init, G_i, G_{i+1} ], (\Init,\Tr, \neg P_j ))$\; \label{line:firstLoop}
}\label{line:g0_end}
$P_{i} \gets ( G_{i}\lor (G_{i+1}\land I_{j+1}) )$\;
\lIf{$i=0$}{
$[ \_,\_, G_{i+1} ] \gets \Pdrblock( [ \Init, G_{i+1} ], (\Init,\Tr,\neg P_i))$
}
\lElse
{
$[ \_,\_, G_{i+1} ] \gets \Pdrblock( [ \Init, G_{i}, G_{i+1} ], (\Init,\Tr, \neg P_i ))$
}\label{line:g-k-ind}
\tcp{$\text{Inv}^\dagger_4$: $G_{i+1} \equiv F_{i+1} \land \Land_{\ell=i-k+1}^{i} \left( G_\ell \lor I_{\ell+1} \right)$}
\tcp{$\text{Inv}_4$: $G_{i+1} \limp F_{i+1} \land \Land_{\ell=i-k+1}^{i} \left( G_\ell \lor I_{\ell+1} \right)$}
\For{$j \gets i + 1$ \KwTo $N + 1$}
{\label{line:g1}
$P_j \gets G_j \lor (G_{j+1} \land I_{j+1})$\;
  \tcp{$\text{Inv}_6$: $G_j \land \Tr \limp P_j$}
$[ \_,\_, G_{j+1} ] \gets \Pdrblock( [ \Init, G_{j}, G_{j+1} ], (\Init,\Tr,\neg P_j ) )$\; \label{line:secondLoop}
$\vG \gets \PdrPush(\vG)$\; \label{line:kavy-pdrpush}
}\label{line:g1_end}
\tcp{$\text{Inv}^\dagger_7$: $\vG$ is an $(i,k)$-extension trace of $\vF$}
\tcp{$\text{Inv}_7$: $\vG$ is an extension trace of $\vF$}
\Return{$\vG$}
\end{algorithm}

The procedure \Kavyextend is shown in Alg.~\ref{alg:kavyextend}. For simplicity
of the presentation, we assume that $\Pdrblock$ does not use inductive generalization. The invariants
marked by $^\dagger$ rely on this assumption. We stress that the assumption is
for presentation only. The correctness of \Kavyextend is independent~of~it.

\Kavyextend starts with a sequence interpolant according to the partitioning
$\cA$. The extension trace $\vG$ is initialized to $\vF$ and $G_{N+1}$ is
initialized to $\top$~(line~\ref{line:init_g}). The rest proceeds in three phases:
\emph{Phase~1} (lines~\ref{line:g0}--\ref{line:g0_end}) computes the prefix
$G_{i-k+2},\ldots,G_{i+1}$ using the first $k-1$ elements of $\vI$;
\emph{Phase~2} (line~\ref{line:g-k-ind}) computes $G_{i+1}$ using $I_{i+1}$;
\emph{Phase~3} (lines~\ref{line:g1}--\ref{line:g1_end}) computes the suffix
$\vG^{i+2}$ using the last $( N-i )$ elements of $\vI$. During this phase, $\PdrPush$~(line~{\ref{line:kavy-pdrpush}}) pushes clauses forward so that they can be used in the next iteration. The correctness of the
phases follows from the invariants shown in Alg.~\ref{alg:kavyextend}. We
present each phase in turn.

Recall that \Pdrblock takes a trace $\vF$ (that is safe up to the last frame)
and a transition system, and returns a safe strengthening of $\vF$, while
ensuring that the result is monotone and clausal. This guarantee is maintained
by Alg~\ref{alg:kavyextend}, by requiring that any clause added to any frame
$G_i$ of $\vG$ is implicitly added to all frames below $G_i$.

\paragraph{Phase~1.} By Lemma~\ref{lem:seqitp_kind}, the first $k$ elements of
the sequence interpolant  computed at
line~\ref{line:itp} over-approximate states reachable in $i+1$ steps of $\Tr$.
Phase~1 uses this to strengthen $G_{i+1}$ using the first $k$ elements of
$\vI$. Note that in that phase, new clauses are always added to frame $G_{i+1}$,
and all frames before it! 

Correctness of Phase~1 (line~\ref{line:firstLoop}) follows from the loop
invariant $\texttt{Inv}_2$. It holds on loop entry since $G_i \land \Tr \limp
I_{i-k+2}$ (since $G_i = F_i$ and~\eqref{eq:itp_prop}) and $G_i \land \Tr \limp G_{i+1}$ (since $\vG$ is initially a trace). Let $G_i$ and $G_i^*$ be the
$i^{th}$ frame before and after execution of iteration $j$ of the loop,
respectively. \Pdrblock blocks $\neg P_j$ at iteration $j$ of the loop. Assume that $\texttt{Inv}_2$ holds at the beginning of the loop.
Then, $G_i^* \limp G_i \land P_j$ since
\Pdrblock strengthens $G_i$. Since $G_j \limp G_i$ and $G_i \limp
G_{i+1}$, this  
simplifies to $G_i^* \limp G_j \lor ( G_i \land I_{j+1} )$. Finally, since $\vG$
is a trace, $\texttt{Inv}_2$ holds at the end of the iteration.

$\texttt{Inv}_2$ ensures that the trace given to \Pdrblock at
line~\ref{line:firstLoop} \emph{can} be made safe relative to $P_j$. From the post-condition of
\Pdrblock, it follows that at iteration $j$, $G_{i+1}$ is strengthened to
$G_{i+1}^*$ such that $G_{i+1}^* \limp P_j$ and
$\vG$ remains a monotone clausal trace. At the end of \emph{Phase~1},
$[G_0,\ldots,G_{i+1}]$ is a clausal monotone trace. 

 Interestingly, the calls to \Pdrblock in this phase do not
satisfy an expected pre-condition: the frame $G_i$ in $[\Init,G_i,G_{i+1}] $
might not be safe for property $P_j$. However, we can see that $\Init \limp P_j$ and from $\texttt{Inv}_2$, it is clear that $P_j$ is inductive relative to $G_i$. 
This is a sufficient precondition for \Pdrblock .

\paragraph{Phase~2.} This phase strengthens $G_{i+1}$ using the interpolant
$I_{i+1}$. After Phase~2, $G_{i+1}$ is $k$-inductive relative to $F_i$.

\paragraph{Phase~3.} Unlike \emph{Phase~1}, $G_{j+1}$ is computed at the
$j^{th}$ iteration. Because of this, the property $P_j$ in this phase is
slightly different than that of Phase~1. Correctness follows from invariant
$\texttt{Inv}_6$ that ensures that at iteration $j$, $G_{j+1}$ \emph{can} be
made safe relative to $P_j$. From the post-condition of \Pdrblock, it follows
that $G_{j+1}$ is strengthened to $G_{j+1}^*$ such that $G_{j+1}^* \limp P_j$
and $\vG$ is a monotone clausal trace. The invariant implies that at the end of
the loop $G_{N+1} \limp G_{N} \lor I_{N+1}$, making $\vG$ safe. Thus, at the end
of the loop $\vG$ is a safe monotone clausal trace that is stronger than
$\vF$. What remains is to show is that $G_{i+1}$ is $k$-inductive relative to
$F_i$.

Let $\varphi$ be the formula from Lemma~\ref{lem:seqitp_kind}. Assuming that
$\Pdrblock$ did no inductive generalization, \emph{Phase~1} maintains
$\texttt{Inv}^\dagger_3$, which states that at iteration $j$,
\Pdrblock strengthens frames $\{G_m\}$, $j < m \leq (i+1)$.
$\texttt{Inv}^\dagger_3$ holds on loop entry, since initially $\vG=\vF$. Let
$G_{m}$, $G_{m}^*$ ( $j < m \leq (i+1)$ ) be frame $m$ at the beginning and at
the end of the loop iteration, respectively. In the loop, \Pdrblock adds clauses
that block $\neg P_j$. Thus, $G_{m}^* \equiv G_{m} \land P_j $. Since $G_j \limp
G_{m}$, this simplifies to $G^*_{m} \equiv G_{m} \land (G_{j} \lor I_{j+1})$.
Expanding $G_{m}$, we get $G^*_{m} \equiv F_{m} \land \Land_{\ell=i-k+1}^{j}
\left( G_\ell \lor I_{\ell+1} \right)$. Thus, $\texttt{Inv}^\dagger_3$ holds at
the end of the loop.

In particular, after line~\ref{line:g-k-ind}, 
$G_{i+1} \equiv F_{i+1} \land \Land_{\ell=i-k+1}^{i} \left( G_\ell \lor I_{\ell+1} \right)$.
Since $\varphi\limp G_{i+1}$, $G_{i+1}$ is $k$-inductive relative to $F_i$.

\begin{theorem}
  Given a safe trace $\vF$ of size $N$ and an SEL $(i,k)$ for $\vF$, \Kavyextend
  returns a clausal monotone extension trace $\vG$ of size $N+1$.
  Furthermore, if $\Pdrblock$ does no inductive generalization then $\vG$ is
  an $(i,k)$-extension trace.
\end{theorem}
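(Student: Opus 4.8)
The statement bundles together the loop invariants annotated in Alg.~\ref{alg:kavyextend} (the $\texttt{Inv}_\bullet$ and $\texttt{Inv}^\dagger_\bullet$ comments), so the plan is to prove those invariants by induction on the loop counters, one phase at a time, treating \Pdrblock purely as a black box with the contract: given an input whose all-but-last frames form a monotone clausal trace and whose last frame \emph{can} be made safe, \Pdrblock returns a monotone clausal trace whose last frame \emph{is} safe, and --- by the bookkeeping convention of Alg.~\ref{alg:kavyextend} --- every clause it adds to a frame is also added to all lower frames. Granting this, the first claim is exactly $\texttt{Inv}_7$ at the end of the algorithm, and the ``furthermore'' claim is $\texttt{Inv}^\dagger_7$, obtained by carrying the stronger $\dagger$-invariants when \Pdrblock does no inductive generalization.

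For \emph{Phase~1}, $\texttt{Inv}_1$ is immediate from the \Pdrblock contract and the copy-down convention. The core is $\texttt{Inv}_2$, $G_i\land\Tr\limp P_j$: on entry $G_i=F_i$, so it follows from $F_i\land\Tr\limp I'_{i-k+2}$ in~\eqref{eq:itp_prop} together with $G_i\land\Tr\limp G'_{i+1}$ (since $\vG$ starts as a trace); it is preserved by the short argument sketched in the text --- \Pdrblock strengthens $G_i$ to $G_i^{\ast}\limp G_i\land P_j$, and monotonicity ($G_j\limp G_i\limp G_{i+1}$) collapses this to $G_i^{\ast}\limp G_j\lor(G_i\land I_{j+1})$, which re-establishes $\texttt{Inv}_2$. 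Under $\texttt{Inv}_2$, \Pdrblock's postcondition gives $G_{i+1}\limp P_j$ with $\vG$ still a monotone clausal trace; replacing $\limp$ by $\equiv$ under the no-generalization hypothesis yields $\texttt{Inv}^\dagger_3$, $G_m\equiv F_m\land\Land_{\ell=i-k+1}^{j-1}(G_\ell\lor I_{\ell+1})$ for $j<m\le i+1$. I expect the main obstacle to sit exactly here: the trace $[\Init,G_i,G_{i+1}]$ handed to \Pdrblock is \emph{not} safe for $\neg P_j$ in general, so one must first prove the auxiliary fact (asserted but not proved in the text) that \Pdrblock remains correct under the weaker precondition ``$\Init\limp P_j$ and $P_j$ is inductive relative to $G_i$'' (here even $G_i\land\Tr\limp P_j'$). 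Proving this against the pseudocode of Alg.~\ref{alg:pdr-block} --- relative inductiveness forbids an unblockable predecessor at the top frame, and $\Init\limp P_j$ forbids a spurious level-$0$ counterexample --- is the delicate step; the remaining invariants are then bookkeeping.

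\emph{Phase~2} is the single \Pdrblock call on line~\ref{line:g-k-ind}; composing its postcondition with $\texttt{Inv}^\dagger_3$ from the end of Phase~1 gives $\texttt{Inv}^\dagger_4$, $G_{i+1}\equiv F_{i+1}\land\Land_{\ell=i-k+1}^{i}(G_\ell\lor I_{\ell+1})$ (and its $\limp$-version $\texttt{Inv}_4$ in general). Since $I_{\ell+1}\limp(G_\ell\lor I_{\ell+1})$ and $F_{i+1}$ is a conjunct of $\varphi=F_{i+1}\land\Land_{m=i-k+2}^{i+1}I_m$, we get $\varphi\limp G_{i+1}$; by Lemma~\ref{lem:seqitp_kind}, $\varphi$ is $k$-inductive relative to $F_i$, so $G_{i+1}$ is as well, and in particular $\Tr[F_i]^k\limp G_{i+1}$. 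For \emph{Phase~3}, $\texttt{Inv}_6$ holds at $j=i+1$ from $G_{i+1}\limp I_{i+1}$ ($\texttt{Inv}_4$), $F_{i+1}\land I_{i+1}\land\Tr\limp I'_{i+2}$ in~\eqref{eq:itp_prop}, and $G_{i+1}\land\Tr\limp G'_{i+2}$; it is preserved by the Phase~1 argument, and the interleaved $\PdrPush$ only strengthens frames while keeping $\vG$ a monotone clausal trace stronger than $\vF$, so it disturbs no invariant. At $j=N+1$, $\texttt{Inv}_6$ gives $G_{N+1}\limp G_N\lor I_{N+1}$, which is safe since $G_N\limp F_N\limp\neg\Bad$ and $I_{N+1}\limp\neg\Bad$ in~\eqref{eq:itp_prop}. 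Collecting: $\vG$ is a monotone clausal trace of size $N+1$, stronger than $\vF$ and safe --- hence an extension trace of $\vF$, which is $\texttt{Inv}_7$ and proves the first claim. Under the no-generalization hypothesis, $G_i=F_i$ and $G_{i+1}$ (untouched after Phase~2) satisfies $\Tr[F_i]^k\limp G_{i+1}$ by Phase~2, so $\vG$ is an $(i,k)$-extension trace, which is $\texttt{Inv}^\dagger_7$ and proves the ``furthermore'' claim.
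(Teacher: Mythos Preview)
Your proposal follows the same route as the paper: establish the annotated invariants $\texttt{Inv}_1$--$\texttt{Inv}_7$ (and their $\dagger$-versions) phase by phase, then invoke Lemma~\ref{lem:seqitp_kind} to conclude $k$-inductiveness of $G_{i+1}$ relative to $F_i$ after Phase~2. You also correctly isolate the one genuinely delicate step --- that the Phase~1 calls to \Pdrblock are made under a weakened precondition (relative inductiveness of $P_j$ plus $\Init\limp P_j$ rather than full safety of the middle frame) --- which the paper flags as well.

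Two small inaccuracies are worth fixing. First, the claim $G_i=F_i$ under no generalization is false: the copy-down convention propagates every clause added to $G_{i+1}$ during Phases~1 and~2 down to $G_i$, so $G_i$ is strictly stronger than $F_i$ in general. Fortunately you never actually use this --- the $(i,k)$-extension condition $\Tr[F_i]^k\limp G_{i+1}$ refers to $F_i$, not $G_i$. Second, your initialization of $\texttt{Inv}_6$ at $j=i{+}1$ appeals to $G_{i+1}\limp I_{i+1}$, but $\texttt{Inv}_4$ only gives $G_{i+1}\limp(G_i\lor I_{i+1})$. The fix is a case split: if the current state lies in $G_i$, its successor lies in $G_{i+1}$ by the trace property $G_i\land\Tr\limp G_{i+1}'$; if it lies in $I_{i+1}$ (and hence in $F_{i+1}\land I_{i+1}$, since $G_{i+1}\limp F_{i+1}$), its successor lies in $I_{i+2}$ by~\eqref{eq:itp_prop}. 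Either way the successor satisfies $P_{i+1}$. Neither point changes the structure of your argument.
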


Of course, assuming that \Pdrblock does no inductive generalization is not
realistic. \Kavyextend remains correct without the assumption: it returns a
trace $\vG$ that is a monotone clausal extension of $\vF$. However, $\vG$ might be
stronger than any $(i,k)$-extension of $\vF$. The invariants marked with
$^\dagger$ are then relaxed to their unmarked versions. Overall,
inductive generalization improves \Kavyextend since it is not restricted to only
a $k$-inductive strengthening.

Importantly, the output of \Kavyextend is a regular inductive trace. Thus,
\Kavyextend is a procedure to strengthen a (relatively) $k$-inductive
certificate to a (relatively) $1$-inductive certificate. Hence, after
\Kavyextend, any strategy for further generalization or trace extension from
\ic, \Pdr, or \Avy is applicable.

 \subsection{Searching for the maximal SEL}
\label{sec:kiew}

In this section, we describe two algorithms for computing the maximal SEL. Both
algorithms can be used to implement line~\ref{line:best_w} of
Alg.~\ref{alg:kavy}. They perform a guided search for group minimal
unsatisfiable subsets. They terminate when having fewer clauses would not
increase the SEL further. The first, called \emph{top-down}, starts from the
largest unrolling of the $\Tr$ and then reduces the length of the unrolling. The
second, called \emph{bottom-up}, finds the largest (regular) extension level
first, and then grows it using strong induction.

\paragraph{Top-down SEL.} A
pair $(i,k)$ is the maximal SEL iff
\begin{align*}
i &= \max\; \{j  \mid 0 \leq j \leq N \cdot \KindTr{\vF}{j}{j+1}\land \Bad(\Vars_{N+1}) \limp \bot\} \\
k &= \min\; \{ \ell \mid 1 \leq \ell \leq (i+1) \cdot \KindTr{\vF}{i}{\ell}\land \Bad(\Vars_{N+1}) \limp \bot\}
\end{align*}
Note that $k$ depends on $i$. For a SEL $(i,k)\in \cK(\vF)$, we refer to the formula
$\Tr[\vF^i]$ as a \emph{suffix} and to number $k$ as the depth of induction. Thus, the search can be split into two phases: (a)
find the smallest suffix while using the maximal depth of induction
allowed (for that suffix), and (b) minimizing the depth of induction $k$ for the
value of $i$ found in step (a). This is captured in Alg. \ref{alg:top-down}. The
algorithm requires at most $( N+1 )$ \sat queries. One downside, however, is
that the formulas constructed in the first phase (line~\ref{line:big-formula}) are large because the
depth of induction is the maximum possible.

\begin{figure}[t]
  \begin{adjustbox}{scale=0.9}
\begin{minipage}[t]{7cm}
  \vspace{0pt}  
\begin{algorithm}[H]
\label{alg:top-down}
\caption{A top down alg. for the maximal SEL.}
\KwIn{A transition system $T=(\Init,\Tr,\Bad)$}
\KwIn{An extendable monotone clausal safe trace $\vF$ of size $N$}
\KwOut{$\max(\cK(\vF))$}
$i \gets N $\;
\While{$i > 0$ }
{
\lIf{$\neg \isSat(\KindTr{\vF}{i}{i+1}\land \Bad(\Vars_{N+1}))$}
{\label{line:big-formula}
\textbf{break}
}
$i\gets (i-1)$\;
}
$k \gets 1$\;
\While{$k < i + 1$ }
{
\lIf{$\neg \isSat(\KindTr{\vF}{i}{k}\land \Bad(\Vars_{N+1}))$}
{
\textbf{break}
}
$k \gets ( k+1 )$\;
}
\Return{$(i,k)$}
\end{algorithm}
\end{minipage}
  \end{adjustbox}
\hfill
    \begin{adjustbox}{scale=0.9}
\begin{minipage}[t]{6.5cm}
  \vspace{0pt}
\begin{algorithm}[H]
\label{alg:bot-up}
\caption{A bottom up alg. for the maximal SEL.}
\KwIn{A transition system $T=(\Init,\Tr,\Bad)$}
\KwIn{An extendable monotone clausal safe trace $\vF$ of size $N$}
\KwOut{$\max(\cK(\vF))$}
$j \gets N $\;
\While{$j > 0$ }
{\label{line:min-suffix}
\lIf{$\neg \isSat(\KindTr{\vF}{j}{1}\land \Bad(\Vars_{N+1}))$}
{
\textbf{break}
}
$j\gets (j-1)$\;
}
$(i,k) \gets (j,1) \mathbin{;} j \gets (j+1)\mathbin{;}\ell \gets 2$\;\label{line:first-sel}
\While{$\ell \leq (j+1) \land j \leq N $}
{\label{line:out-while}
\lIf{$\isSat(\KindTr{\vF}{j}{\ell}\land \Bad(\Vars_{N+1}))$}
{\label{line:k-incr}
$\ell \gets (\ell +1)$
}
\Else
{
$(i,k)\gets (j,\ell)$\;\label{line:sel}
$j \gets (j +1)$\;\label{line:i-incr}
}
\label{line:end-out-while}
}
\Return{$(i,k)$}
\end{algorithm}
\end{minipage}
\end{adjustbox}
\vspace{-10pt}
\end{figure}

\paragraph{Bottom-up SEL.} 
Alg.~\ref{alg:bot-up} searches for a SEL by first finding a maximal regular
extension level~(line~\ref{line:min-suffix}) and then searching for larger
SELs~(lines~\ref{line:out-while}~to~\ref{line:end-out-while}). Observe that if
$(j,\ell) \not \in \cK(\vF)$, then $\forall p > j \cdot (p,\ell)\not \in
\cK(\vF)$. This is used at line~\ref{line:k-incr} to increase the depth of
induction once it is known that $(j,\ell)\not \in \cK(\vF)$. On the other hand,
if $(j,\ell)\in \cK(\vF)$, there might be a larger SEL $(j+1,\ell)$. Thus,
whenever a SEL $(j,\ell)$ is found, it is stored in $(i,k)$ and the search
continues~(line~\ref{line:i-incr}). The algorithm terminates when there are no
more valid SEL candidates and returns the last valid SEL. Note that $\ell$ is
incremented only when there does not exists a larger SEL with the current value
of $\ell$. Thus, for each valid level $j$, if there exists SELs with level $j$,
the algorithm is guaranteed to find the largest such SEL. Moreover, the level is
increased at every possible opportunity. Hence, at the end $(i,k) = \max
\cK(\vF)$.

In the worst case, Alg.~\ref{alg:bot-up} makes at most $3N$ \sat queries.
However, compared to Alg.~\ref{alg:top-down}, the queries are smaller. Moreover,
the computation is incremental and can be aborted with a sub-optimal solution
after execution of line~\ref{line:first-sel} or line~\ref{line:sel}. Note that
at line~\ref{line:first-sel}, $i$ is a regular extension level (i.e., as in
\Avy), and every execution of line~\ref{line:sel} results in a larger SEL.

 \section{Evaluation}
\label{sec:evaluation}
We implemented \Kavy on top of the \Avy Model
Checker\footnote{All code, benchmarks, and results are available at
  \url{https://arieg.bitbucket.io/avy/} }. For line~\ref{line:best_w} of
Alg.~\ref{alg:kavy} we used Alg~\ref{alg:top-down}. We evaluated \Kavy's performance against a
version of \Avy~\cite{DBLP:conf/cav/VizelG14} from the Hardware Model Checking
Competition~2017~\cite{DBLP:conf/fmcad/BiereDH17}, and the \Pdr engine of
ABC~\cite{DBLP:conf/fmcad/EenMB11}. We have used the benchmarks from HWMCC'14,
'15, and '17. Benchmarks that are not solved by any of the solvers are excluded
from the presentation. The experiments were conducted on a cluster running Intel
E5-2683 V4 CPUs at 2.1 GHz with 8GB RAM limit and 30~minutes time limit.
 
\begin{figure}[t]
\centering
\subfloat[ ][Safe HWMCC instances. ]{\label{fig:shiftall}\includegraphics[width=0.48\textwidth]{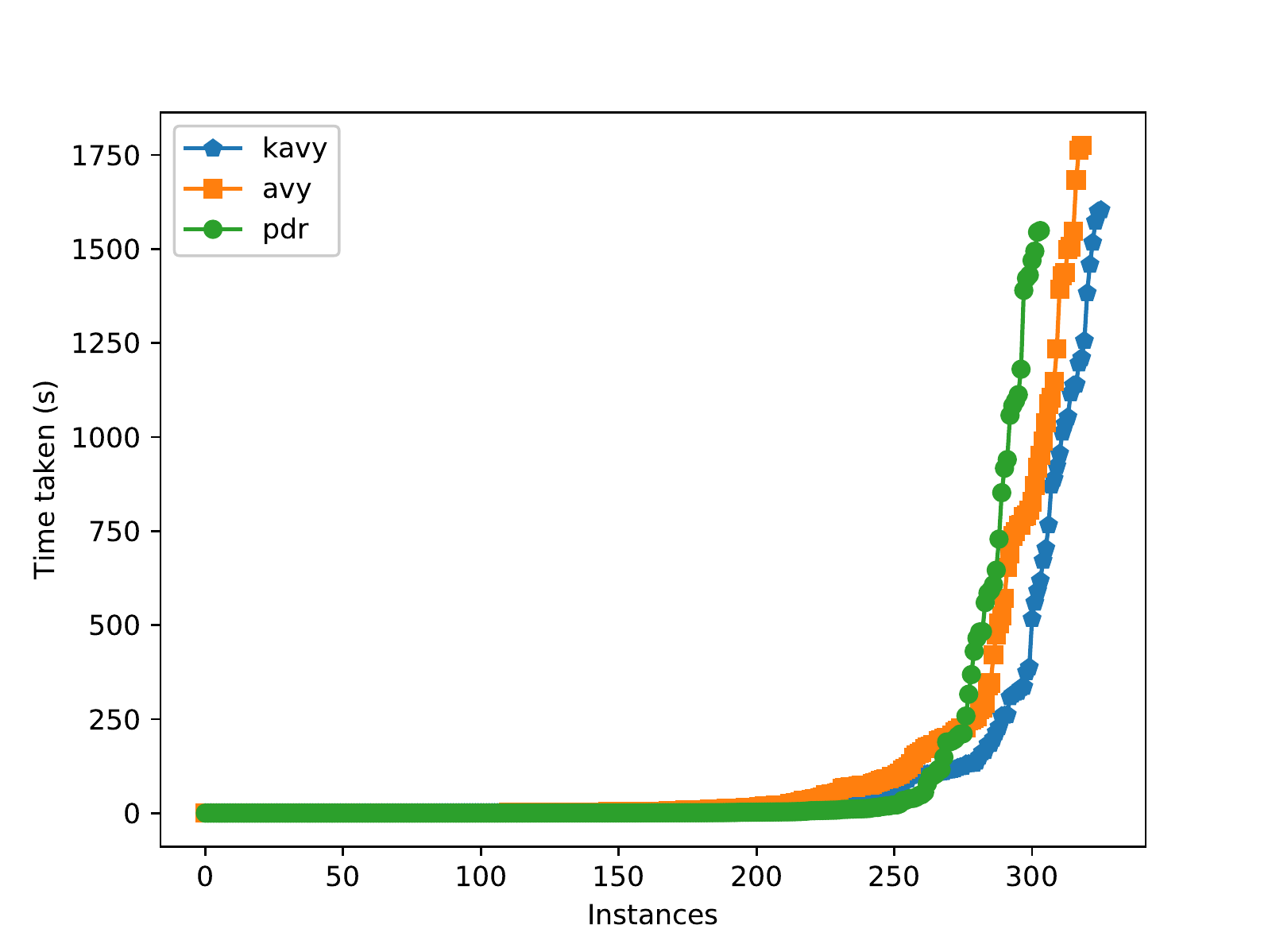}}
\subfloat[ ][\emph{shift} instances. ]{\label{fig:shiftcactus}\includegraphics[width=0.48\textwidth]{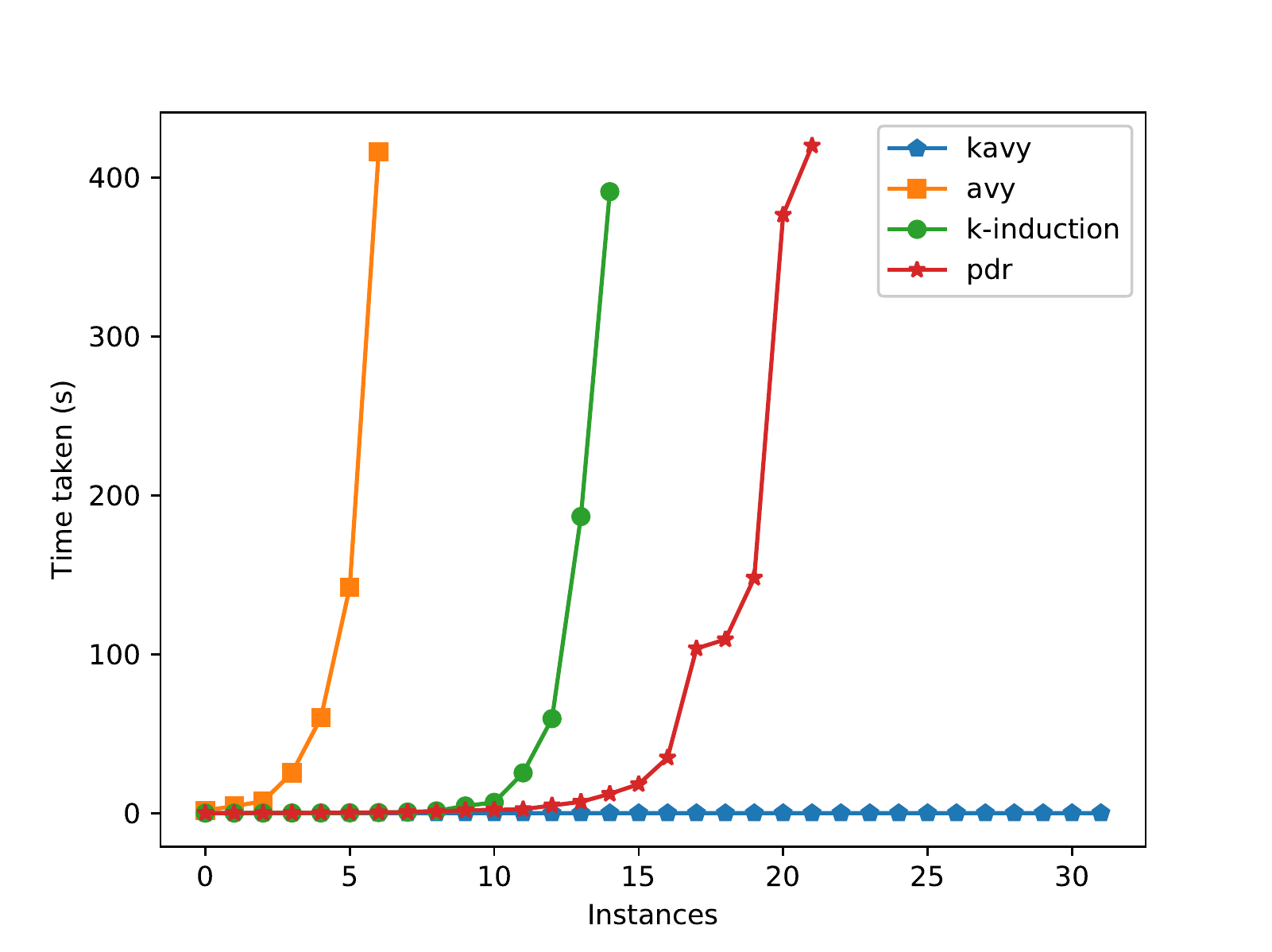}}
\caption{Runtime comparison on SAFE HWMCC instances~\protect\subref{fig:shiftall} and \emph{shift} instances~\protect\subref{fig:shiftcactus}.}
\end{figure}

The results are summarized in Table~\ref{table:fullRun}. The HWMCC has a wide
variety of benchmarks. We aggregate the results based on the competition, and
also benchmark origin (based on the name). Some named categories (e.g.,
\emph{intel}) include benchmarks that have not been included in any competition. The
first column in Table~\ref{table:fullRun} indicates the category. \textbf{Total}
is the number of all available benchmarks, ignoring duplicates. That is, if a
benchmark appeared in multiple categories, it is counted only once. Numbers in
brackets indicate the number of instances that are solved uniquely by the
solver. For example, \Kavy solves 14 instances in \emph{oc8051} that are not
solved by any other solver. The \textsc{VBS} column indicates the \emph{Virtual
  Best Solver} --- the result of running all the three solvers in parallel and
stopping as soon as one solver terminates successfully.

Overall, \Kavy solves more \safe instances than both \Avy and \Pdr, while taking less
time than \Avy (we report time for solved instances, ignoring timeouts). The
\textsc{VBS} column shows that \Kavy is a promising new strategy, significantly
improving overall performance. In the rest of this section, we analyze the
results in more detail, provide detailed run-time comparison between the tools,
and isolate the effect of the new $k$-inductive strategy.

\begin{table}[t]
\centering
\caption{Summary of instances solved by each tool. Timeouts were ignored when computing the time column.}
\label{table:fullRun}
\scalebox{0.8}{\begin{tabular}{|c|c|c|c|c|c|c|c|c|c||c|c|}
\hline
BENCHMARKS	&\multicolumn{3}{|c|}{\Kavy}	&\multicolumn{3}{|c|}{\Avy}&\multicolumn{3}{|c||}{\Pdr}&\multicolumn{2}{|c|}{VBS}\\
\cline{2-12}
 &\safe & \unsafe & time(m) &\safe &\unsafe & time(m)&\safe & \unsafe & time(m)& \safe &\unsafe\\
\hline 
\hline
HWMCC' 17 &$137$~($16$)&$38$ &$499$ &$128$~($3$)&$38$ &$406$ &$109$~($6$)&$40$~($5$) &$174$ &$150$&$44$\\
\hline
HWMCC' 15 &$193$~($4$)&$84$ &$412$ &$191$~($3$)&$92$~($6$) &$597$ &$194$~($16$)&$67$~($12$) &$310$ &$218$&$104$\\
\hline
HWMCC' 14 &$49$&$27$~($1$) &$124$ &$58$~($4$)&$26$ &$258$ &$55$~($6$)&$19$~($2$) &$172$ &$64$&$29$\\
\hline \hline
intel &$32$~($1$)&$9$ &$196$ &$32$~($1$)&$9$ &$218$ &$19$&$5$~($1$) &$40$ &$33$&$10$\\
\hline
6s &$73$~($2$)&$20$ &$157$ &$81$~($4$)&$21$~($1$) &$329$ &$67$~($3$)&$14$ &$51$ &$86$&$21$\\
\hline
nusmv &$13$&$0$ &$5$ &$14$&$0$ &$29$ &$16$~($2$)&$0$ &$38$ &$16$&$0$\\
\hline
bob &$30$&$5$ &$21$ &$30$&$6$~($1$) &$30$ &$30$~($1$)&$8$~($3$) &$32$ &$31$&$9$\\
\hline
pdt &$45$&$1$ &$54$ &$45$~($1$)&$1$ &$57$ &$47$~($3$)&$1$ &$62$ &$49$&$1$\\
\hline
oski &$26$&$89$~($1$) &$174$ &$28$~($2$)&$92$~($4$) &$217$ &$20$&$53$ &$63$ &$28$&$93$\\
\hline
beem &$10$&$1$ &$49$ &$10$&$2$ &$32$ &$20$~($8$)&$7$~($5$) &$133$ &$20$&$7$\\
\hline
oc8051 &$34$~($14$)&$0$ &$286$ &$20$&$0$ &$99$ &$6$~($1$)&$1$~($1$) &$77$ &$35$&$1$\\
\hline
power &$4$&$0$ &$25$ &$3$&$0$ &$3$ &$8$~($4$)&$0$ &$31$ &$8$&$0$\\
\hline
shift &$5$~($2$)&$0$ & $1$ &$1$&$0$ &$18$ &$3$&$0$ & $1$ &$5$&$0$\\
\hline
necla &$5$&$0$ &$4$ &$7$~($1$)&$0$ & $1$ &$5$~($1$)&$0$ &$4$ &$8$&$0$\\
\hline
prodcell &$0$&$0$ &$0$ &$0$&$1$ &$28$ &$0$&$4$~($3$) &$2$ &$0$&$4$\\
\hline
bc57 &$0$&$0$ &$0$ &$0$&$0$ &$0$ &$0$&$4$~($4$) &$9$ &$0$&$4$\\
\hline \hline
\textbf{Total} &$326$~($19$)&$141$~($1$) &$957$ &$319$~($8$)&$148$~($6$) &$1041$ &$304$~($25$)&$117$~($17$) &$567$ &$370$&$167$\\
\hline
\end{tabular}
}
\end{table}

\begin{figure}[t]
\centering
\subfloat[ ][\Kavy vs \Avy]{\label{fig:kavy-avy}\includegraphics[width=0.3\textwidth]{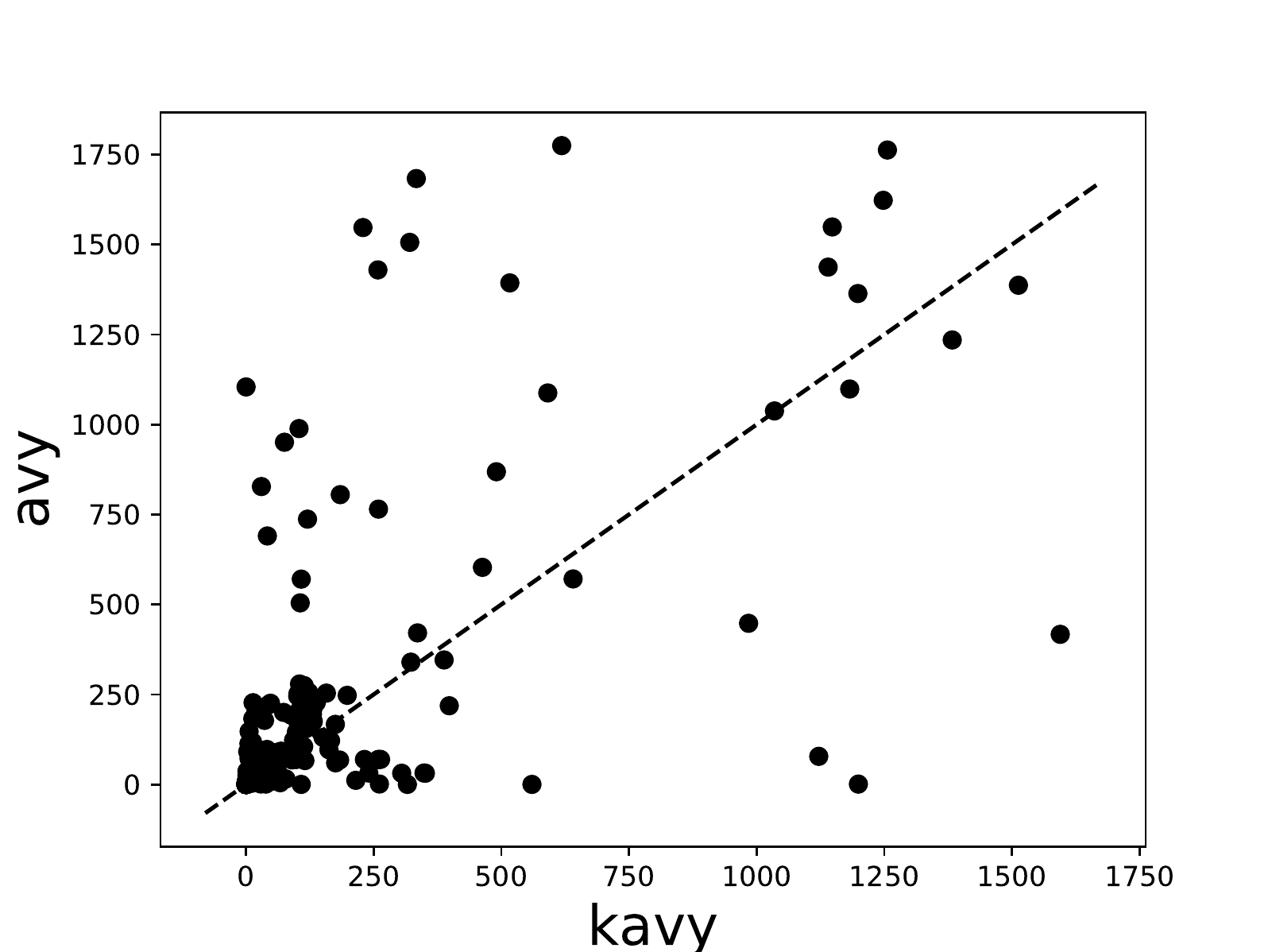}}
\subfloat[ ][\Kavy vs \Pdr]{\label{fig:kavy-abc}\includegraphics[width=0.3\textwidth]{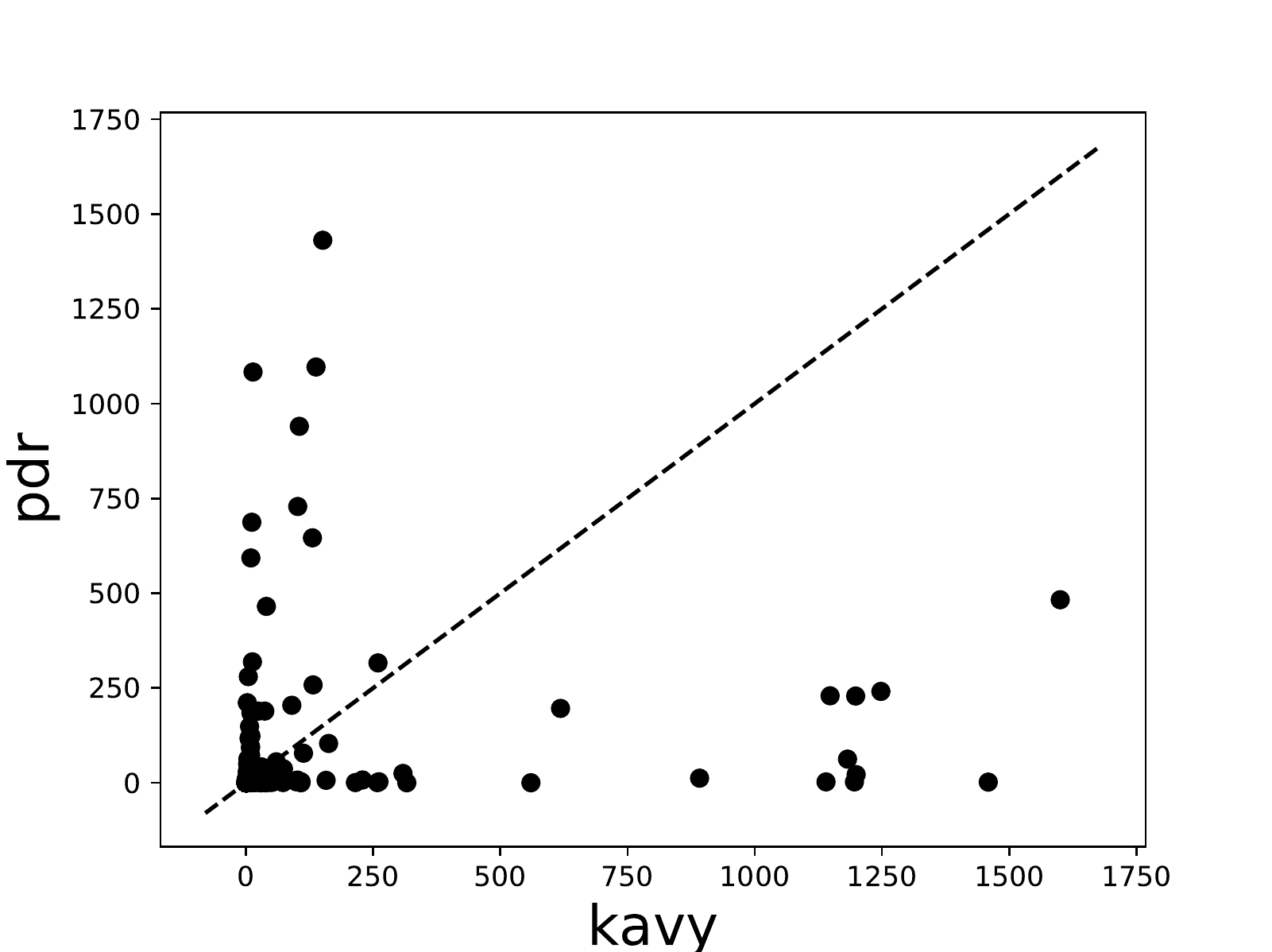}}
\subfloat[ ][\Kavy vs \Kavyvanilla]{\label{fig:kavy-vanilla}\includegraphics[width=0.3\textwidth]{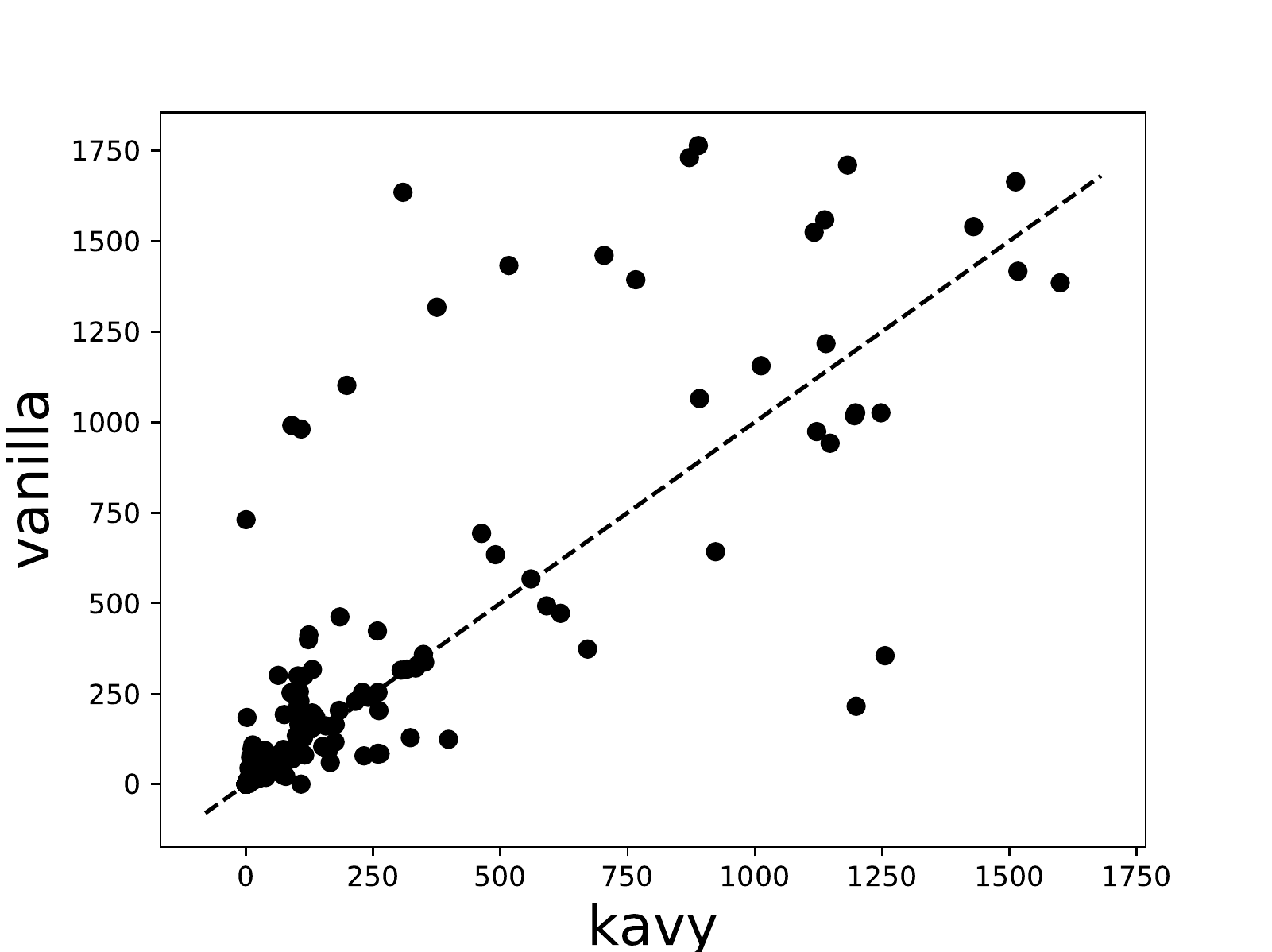}}
\label{fig:running-times}
\subfloat[ ][\Kavy vs \Avy]{\label{fig:kavy-avy-depth}\includegraphics[width=0.3\textwidth]{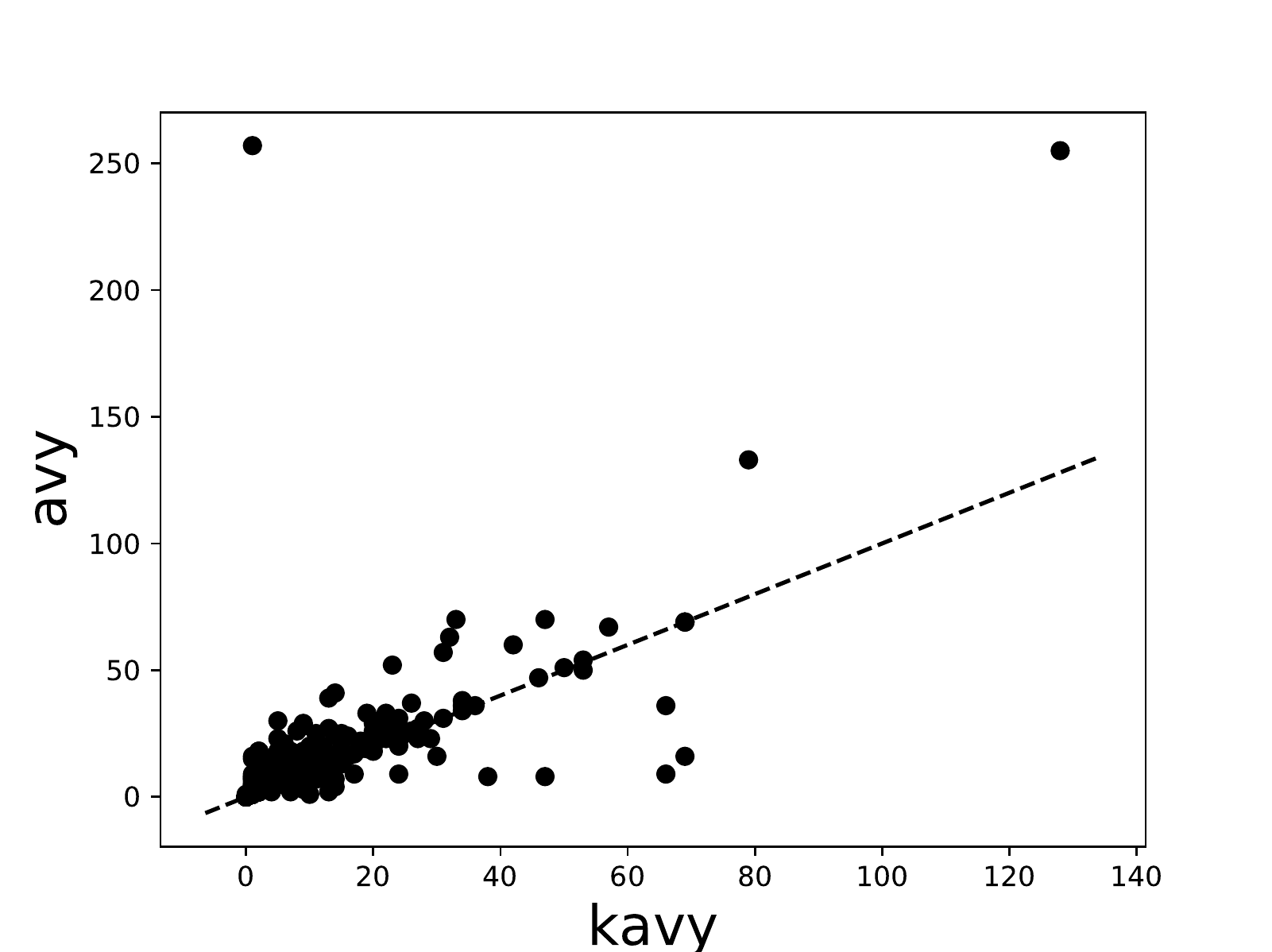}}
\subfloat[ ][\Kavy vs \Pdr]{\label{fig:kavy-abc-depth}\includegraphics[width=0.3\textwidth]{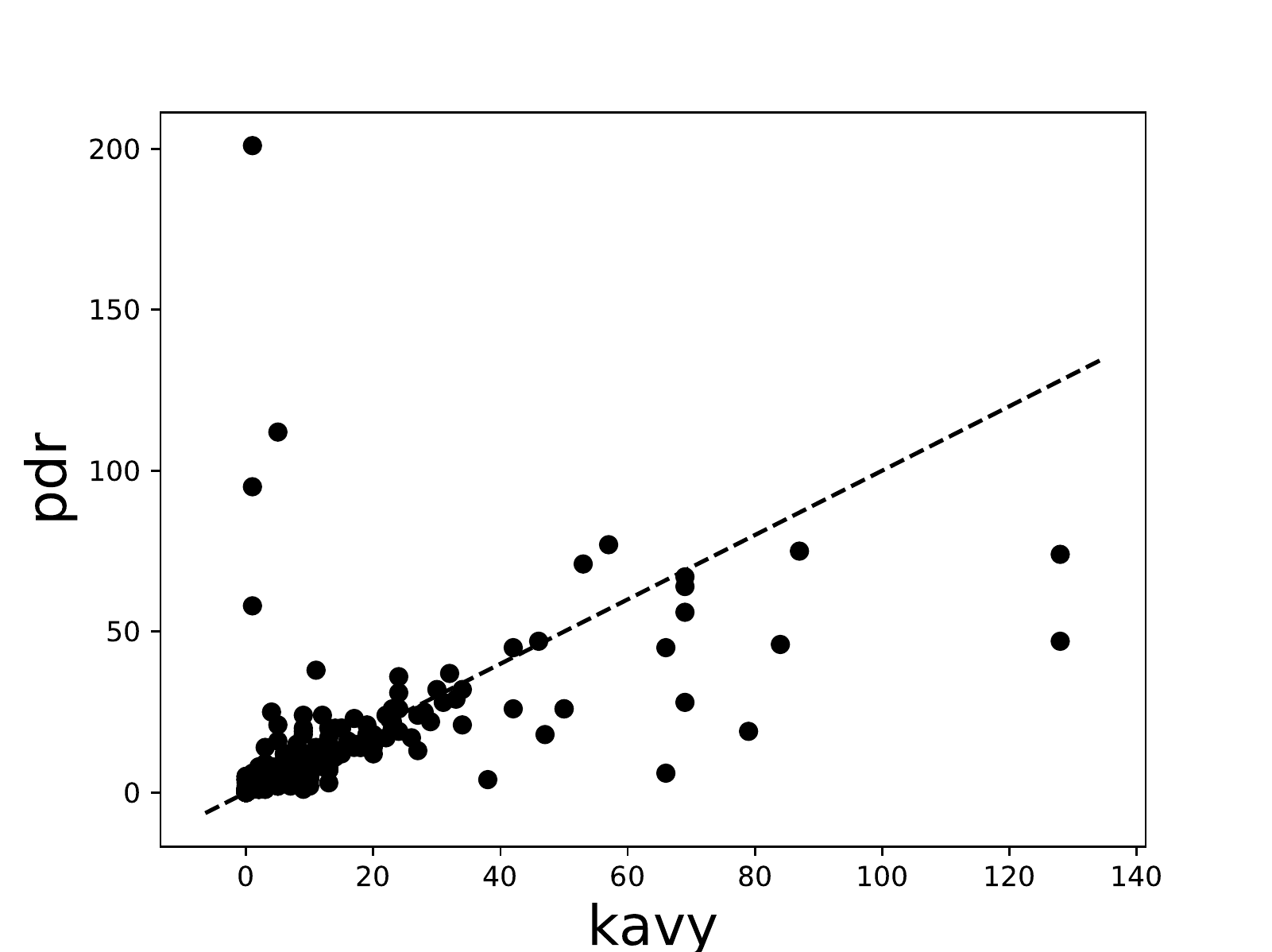}}
\subfloat[ ][\Kavy vs \Kavyvanilla]{\label{fig:kavy-vanilla-depth}\includegraphics[width=0.3\textwidth]{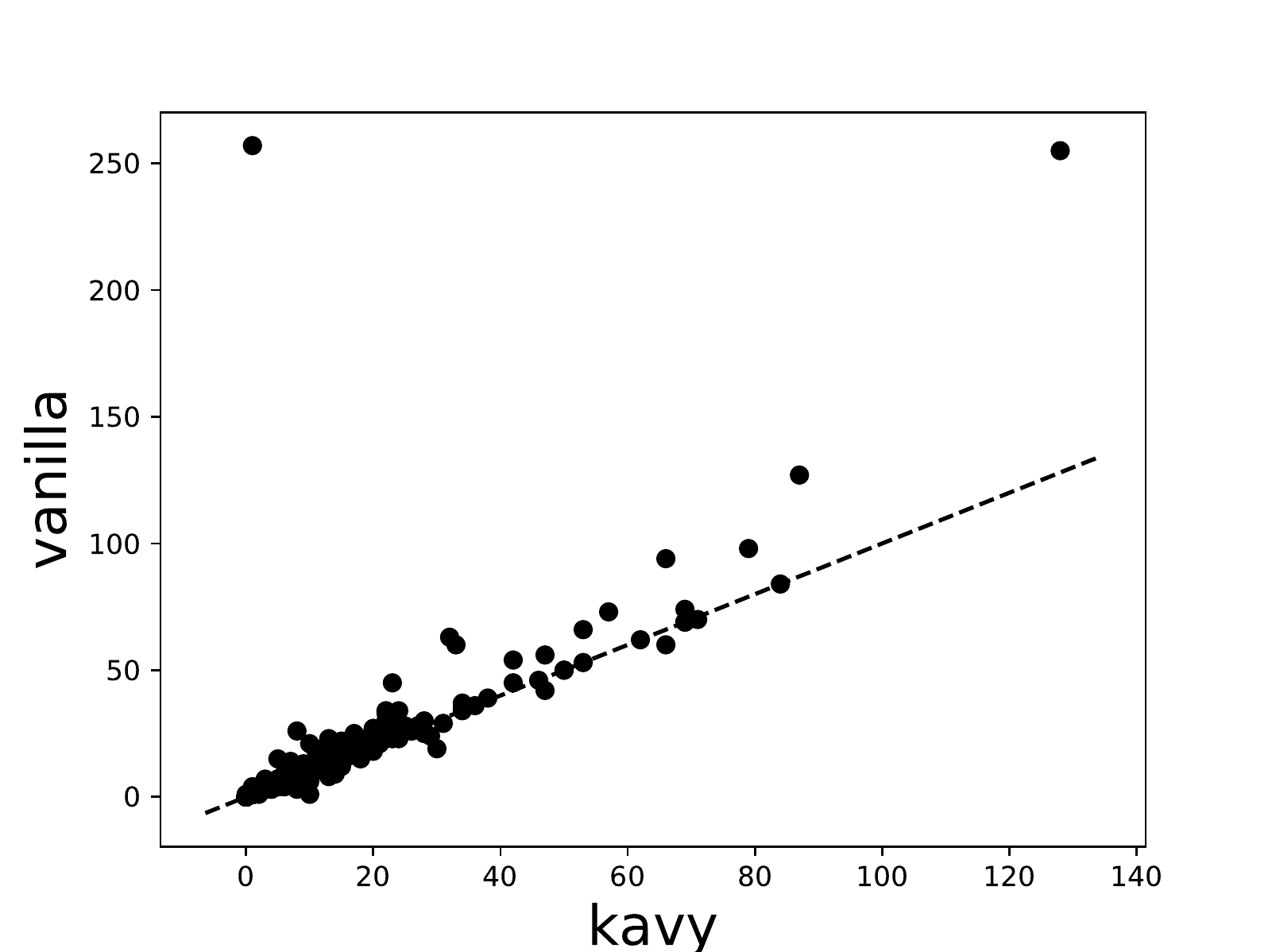}}
\label{fig:depths}
\caption{Comparing running time~(\protect\subref{fig:kavy-avy},~\protect\subref{fig:kavy-abc},~\protect\subref{fig:kavy-vanilla}) and depth of convergence~(\protect\subref{fig:kavy-avy-depth},~\protect\subref{fig:kavy-abc-depth},~\protect\subref{fig:kavy-vanilla-depth}) of \Avy, \Pdr and \Kavyvanilla with \Kavy. \Kavy is shown on the x-axis. Points above the diagonal are better for \Kavy. Only those instances that have been solved by both solvers are shown in each plot. }
\end{figure}

To compare the running time, we present scatter plots comparing \Kavy and \Avy
(Fig.~\ref{fig:kavy-avy}), and \Kavy and \Pdr (Fig.~\ref{fig:kavy-abc}). In both
figures, \Kavy is at the bottom. Points above the diagonal are better for \Kavy.
Compared to \Avy, whenever an instance is solved by both solvers, \Kavy is often
faster, sometimes by orders of magnitude. Compared to \Pdr, \Kavy and \Pdr
perform well on very different instances. This is similar to the observation
made by the authors of the original paper that presented
\Avy~\cite{DBLP:conf/cav/VizelG14}. Another indicator of performance is the
depth of convergence. This is summarized in Fig.~\ref{fig:kavy-avy-depth} and
Fig.~\ref{fig:kavy-abc-depth}. \Kavy often converges much sooner than \Avy. The
comparison with \Pdr is less clear which is consistent with the difference in
performance between the two. To get the whole picture, Fig.~\ref{fig:shiftall} presents a cactus plot that compares the running times of the algorithms on all these benchmarks.

To isolate the effects of $k$-induction, we compare \Kavy to a version of \Kavy
with $k$-induction disabled, which we call \Kavyvanilla. Conceptually,
\Kavyvanilla is similar to \Avy since it extends the trace using a $1$-inductive
extension trace, but its implementation is based on \Kavy. The results for the
running time and the depth of convergence are shown in
Fig.~\ref{fig:kavy-vanilla} and Fig.~\ref{fig:kavy-vanilla-depth}, respectively.
The results are very clear --- using strong extension traces significantly
improves performance and has non-negligible affect on depth of convergence.

Finally, we discovered one family of benchmarks, called shift, on which \Kavy
performs orders of magnitude better than all other techniques. The benchmarks
come from encoding bit-vector decision problem into
circuits~\cite{DBLP:journals/mst/KovasznaiFB16,DBLP:conf/fmcad/VizelNM17}. The
shift family corresponds to deciding satisfiability of $(x+y)=(x << 1)$ for two
bit-vecors $x$ and $y$. The family is parameterized by bit-width. The property
is $k$-inductive, where $k$ is the bit-width of $x$. The results of running
\Avy, \Pdr, $k$-induction\footnote{We used the $k$-induction engine \texttt{ind}
  in \textsc{Abc}~\cite{DBLP:conf/cav/BraytonM10}.}, and \Kavy are shown in
Fig.~\ref{fig:shiftcactus}. Except for \Kavy, all techniques exhibit exponential
behavior in the bit-width, while \Kavy remains constant. Deeper analysis
indicates that \Kavy finds a small inductive invariant while exploring just
two steps in the execution of the circuit. At the same time, neither inductive
generalization nor $k$-induction alone are able to consistently find the same
invariant quickly.

 \section{Conclusion}
\label{sec:conclusion}
In this paper, we present \Kavy --- an SMC algorithm that effectively uses
$k$-inductive reasoning to guide interpolation and inductive generalization.
\Kavy searches both for a good inductive strengthening and for the most
effective induction depth $k$. We have implemented \Kavy on top of \Avy Model
Checker. The experimental results on HWMCC instances show that our approach is
effective. 

The search for the maximal SEL is an overhead in \Kavy. There could be
benchmarks in which this overhead outweighs its benefits. However, we have not
come across such benchmarks so far. In such cases, \Kavy can choose to settle
for a sub-optimal SEL as mentioned in section~\ref{sec:kiew}. Deciding when and
how much to settle for remains a challenge.

 \vspace{-10pt}
\subsubsection{Acknowledgements}
We thank the anonymous reviewers and Oded Padon for their thorough review and
insightful comments. This research was enabled in part by support provided by
Compute Ontario (\url{https://computeontario.ca/}), Compute Canada
(\url{https://www.computecanada.ca/}) and the grants from Natural Sciences and
Enginerring Research Council Canada.

\bibliographystyle{plain}
\bibliography{ref}

\begin{thebibliography}{10}

\bibitem{DBLP:conf/cp/AudemardLST16}
Gilles Audemard, Jean{-}Marie Lagniez, Nicolas Szczepanski, and S{\'{e}}bastien
  Tabary.
\newblock An adaptive parallel {SAT} solver.
\newblock In {\em Principles and Practice of Constraint Programming - 22nd
  International Conference, {CP} 2016, Toulouse, France, September 5-9, 2016,
  Proceedings}, pages 30--48, 2016.

\bibitem{DBLP:journals/jsat/BelovM12}
Anton Belov and Jo{\~{a}}o Marques{-}Silva.
\newblock {MUSer2: An Efficient {MUS} Extractor}.
\newblock {\em {JSAT}}, 8(3/4):123--128, 2012.

\bibitem{DBLP:conf/fmcad/BerryhillIVV17}
Ryan Berryhill, Alexander Ivrii, Neil Veira, and Andreas~G. Veneris.
\newblock {Learning support sets in IC3 and Quip: The good, the bad, and the
  ugly}.
\newblock In {\em 2017 Formal Methods in Computer Aided Design, {FMCAD} 2017,
  Vienna, Austria, October 2-6, 2017}, pages 140--147, 2017.

\bibitem{DBLP:conf/tacas/BiereCCZ99}
Armin Biere, Alessandro Cimatti, Edmund~M. Clarke, and Yunshan Zhu.
\newblock {Symbolic Model Checking without BDDs}.
\newblock In {\em Tools and Algorithms for Construction and Analysis of
  Systems, 5th International Conference, {TACAS} '99, Held as Part of the
  European Joint Conferences on the Theory and Practice of Software, ETAPS'99,
  Amsterdam, The Netherlands, March 22-28, 1999, Proceedings}, pages 193--207,
  1999.

\bibitem{DBLP:conf/fmcad/BiereDH17}
Armin Biere, Tom van Dijk, and Keijo Heljanko.
\newblock Hardware model checking competition 2017.
\newblock In Daryl Stewart and Georg Weissenbacher, editors, {\em 2017 Formal
  Methods in Computer Aided Design, {FMCAD} 2017, Vienna, Austria, October 2-6,
  2017}, page~9. {IEEE}, 2017.

\bibitem{DBLP:conf/birthday/BjornerGMR15}
Nikolaj Bj{\o}rner, Arie Gurfinkel, Kenneth~L. McMillan, and Andrey
  Rybalchenko.
\newblock Horn clause solvers for program verification.
\newblock In {\em Fields of Logic and Computation {II} - Essays Dedicated to
  Yuri Gurevich on the Occasion of His 75th Birthday}, pages 24--51, 2015.

\bibitem{DBLP:conf/vmcai/Bradley11}
Aaron~R. Bradley.
\newblock {SAT-Based Model Checking without Unrolling}.
\newblock In {\em Verification, Model Checking, and Abstract Interpretation -
  12th International Conference, {VMCAI} 2011, Austin, TX, USA, January 23-25,
  2011. Proceedings}, pages 70--87, 2011.

\bibitem{DBLP:conf/cav/BraytonM10}
Robert~K. Brayton and Alan Mishchenko.
\newblock {ABC: An Academic Industrial-Strength Verification Tool}.
\newblock In {\em CAV}, pages 24--40, 2010.

\bibitem{DBLP:conf/cav/ChampionMST16}
Adrien Champion, Alain Mebsout, Christoph Sticksel, and Cesare Tinelli.
\newblock {The Kind 2 Model Checker}.
\newblock In {\em Computer Aided Verification - 28th International Conference,
  {CAV} 2016, Toronto, ON, Canada, July 17-23, 2016, Proceedings, Part {II}},
  pages 510--517, 2016.

\bibitem{DBLP:journals/jsyml/Craig57a}
William Craig.
\newblock Three uses of the herbrand-gentzen theorem in relating model theory
  and proof theory.
\newblock {\em J. Symb. Log.}, 22(3):269--285, 1957.

\bibitem{DBLP:conf/cav/MouraORRSST04}
Leonardo~Mendon{\c{c}}a de~Moura, Sam Owre, Harald Rue{\ss}, John~M. Rushby,
  Natarajan Shankar, Maria Sorea, and Ashish Tiwari.
\newblock {SAL} 2.
\newblock In {\em Computer Aided Verification, 16th International Conference,
  {CAV} 2004, Boston, MA, USA, July 13-17, 2004, Proceedings}, pages 496--500,
  2004.

\bibitem{DBLP:conf/fmcad/EenMA10}
Niklas E{\'{e}}n, Alan Mishchenko, and Nina Amla.
\newblock A single-instance incremental {SAT} formulation of proof- and
  counterexample-based abstraction.
\newblock In {\em Proceedings of 10th International Conference on Formal
  Methods in Computer-Aided Design, {FMCAD} 2010, Lugano, Switzerland, October
  20-23}, pages 181--188, 2010.

\bibitem{DBLP:conf/fmcad/EenMB11}
Niklas E{\'{e}}n, Alan Mishchenko, and Robert~K. Brayton.
\newblock Efficient implementation of property directed reachability.
\newblock In {\em International Conference on Formal Methods in Computer-Aided
  Design, {FMCAD} '11, Austin, TX, USA, October 30 - November 02, 2011}, pages
  125--134, 2011.

\bibitem{DBLP:conf/nfm/GarocheKT13}
Pierre{-}Lo{\"{\i}}c Garoche, Temesghen Kahsai, and Cesare Tinelli.
\newblock Incremental invariant generation using logic-based automatic abstract
  transformers.
\newblock In {\em {NASA} Formal Methods, 5th International Symposium, {NFM}
  2013, Moffett Field, CA, USA, May 14-16, 2013. Proceedings}, pages 139--154,
  2013.

\bibitem{DBLP:conf/fmcad/GurfinkelI15}
Arie Gurfinkel and Alexander Ivrii.
\newblock Pushing to the top.
\newblock In {\em Formal Methods in Computer-Aided Design, {FMCAD} 2015,
  Austin, Texas, USA, September 27-30, 2015.}, pages 65--72, 2015.

\bibitem{DBLP:conf/fmcad/GurfinkelI17}
Arie Gurfinkel and Alexander Ivrii.
\newblock {$K$-induction without unrolling}.
\newblock In {\em 2017 Formal Methods in Computer Aided Design, {FMCAD} 2017,
  Vienna, Austria, October 2-6, 2017}, pages 148--155, 2017.

\bibitem{DBLP:conf/fmcad/HeuleHW13}
Marijn Heule, Warren A.~Hunt Jr., and Nathan Wetzler.
\newblock Trimming while checking clausal proofs.
\newblock In {\em Formal Methods in Computer-Aided Design, {FMCAD} 2013,
  Portland, OR, USA, October 20-23, 2013}, pages 181--188, 2013.

\bibitem{DBLP:conf/cade/JarvisaloHB12}
Matti J{\"{a}}rvisalo, Marijn Heule, and Armin Biere.
\newblock Inprocessing rules.
\newblock In {\em Automated Reasoning - 6th International Joint Conference,
  {IJCAR} 2012, Manchester, UK, June 26-29, 2012. Proceedings}, pages 355--370,
  2012.

\bibitem{DBLP:conf/fmcad/JovanovicD16}
Dejan Jovanovic and Bruno Dutertre.
\newblock Property-directed $k$-induction.
\newblock In {\em 2016 Formal Methods in Computer-Aided Design, {FMCAD} 2016,
  Mountain View, CA, USA, October 3-6, 2016}, pages 85--92, 2016.

\bibitem{DBLP:conf/nfm/KahsaiGT11}
Temesghen Kahsai, Yeting Ge, and Cesare Tinelli.
\newblock Instantiation-based invariant discovery.
\newblock In {\em {NASA} Formal Methods - Third International Symposium, {NFM}
  2011, Pasadena, CA, USA, April 18-20, 2011. Proceedings}, pages 192--206,
  2011.

\bibitem{DBLP:journals/mst/KovasznaiFB16}
Gergely Kov{\'{a}}sznai, Andreas Fr{\"{o}}hlich, and Armin Biere.
\newblock Complexity of fixed-size bit-vector logics.
\newblock {\em Theory Comput. Syst.}, 59(2):323--376, 2016.

\bibitem{DBLP:conf/sat/LiangGPC16}
Jia~Hui Liang, Vijay Ganesh, Pascal Poupart, and Krzysztof Czarnecki.
\newblock Learning rate based branching heuristic for {SAT} solvers.
\newblock In {\em Theory and Applications of Satisfiability Testing - {SAT}
  2016 - 19th International Conference, Bordeaux, France, July 5-8, 2016,
  Proceedings}, pages 123--140, 2016.

\bibitem{DBLP:conf/sat/LiangOMTLG18}
Jia~Hui Liang, Chanseok Oh, Minu Mathew, Ciza Thomas, Chunxiao Li, and Vijay
  Ganesh.
\newblock Machine learning-based restart policy for {CDCL} {SAT} solvers.
\newblock In {\em Theory and Applications of Satisfiability Testing - {SAT}
  2018 - 21st International Conference, {SAT} 2018, Held as Part of the
  Federated Logic Conference, FloC 2018, Oxford, UK, July 9-12, 2018,
  Proceedings}, pages 94--110, 2018.

\bibitem{DBLP:conf/cav/McMillan03}
Kenneth~L. McMillan.
\newblock {Interpolation and SAT-Based Model Checking}.
\newblock In {\em Computer Aided Verification, 15th International Conference,
  {CAV} 2003, Boulder, CO, USA, July 8-12, 2003, Proceedings}, pages 1--13,
  2003.

\bibitem{DBLP:reference/mc/McMillan18}
Kenneth~L. McMillan.
\newblock Interpolation and model checking.
\newblock In {\em Handbook of Model Checking.}, pages 421--446. 2018.

\bibitem{DBLP:conf/fmcad/MebsoutT16}
Alain Mebsout and Cesare Tinelli.
\newblock {Proof certificates for SMT-based model checkers for infinite-state
  systems}.
\newblock In {\em 2016 Formal Methods in Computer-Aided Design, {FMCAD} 2016,
  Mountain View, CA, USA, October 3-6, 2016}, pages 117--124, 2016.

\bibitem{DBLP:conf/fmcad/SheeranSS00}
Mary Sheeran, Satnam Singh, and Gunnar St{\aa}lmarck.
\newblock {Checking Safety Properties Using Induction and a SAT-Solver}.
\newblock In {\em Formal Methods in Computer-Aided Design, Third International
  Conference, {FMCAD} 2000, Austin, Texas, USA, November 1-3, 2000,
  Proceedings}, pages 108--125, 2000.

\bibitem{DBLP:conf/fmcad/VizelG09}
Yakir Vizel and Orna Grumberg.
\newblock Interpolation-sequence based model checking.
\newblock In {\em Proceedings of 9th International Conference on Formal Methods
  in Computer-Aided Design, {FMCAD} 2009, 15-18 November 2009, Austin, Texas,
  {USA}}, pages 1--8, 2009.

\bibitem{DBLP:conf/cav/VizelG14}
Yakir Vizel and Arie Gurfinkel.
\newblock Interpolating property directed reachability.
\newblock In {\em Computer Aided Verification - 26th International Conference,
  {CAV} 2014, Held as Part of the Vienna Summer of Logic, {VSL} 2014, Vienna,
  Austria, July 18-22, 2014. Proceedings}, pages 260--276, 2014.

\bibitem{DBLP:conf/cav/VizelGM15}
Yakir Vizel, Arie Gurfinkel, and Sharad Malik.
\newblock {Fast Interpolating {BMC}}.
\newblock In {\em Computer Aided Verification - 27th International Conference,
  {CAV} 2015, San Francisco, CA, USA, July 18-24, 2015, Proceedings, Part {I}},
  pages 641--657, 2015.

\bibitem{DBLP:conf/fmcad/VizelNM17}
Yakir Vizel, Alexander Nadel, and Sharad Malik.
\newblock {Solving linear arithmetic with SAT-based model checking}.
\newblock In {\em 2017 Formal Methods in Computer Aided Design, {FMCAD} 2017,
  Vienna, Austria, October 2-6, 2017}, pages 47--54, 2017.

\end{thebibliography}

\end{document}